\theoremstyle{plain}
\newtheorem{theorem}{Theorem}[section]
\newtheorem{lemma}[theorem]{Lemma}
\theoremstyle{definition}
\newtheorem{problem}[theorem]{Problem}
\theoremstyle{remark}
\newtheorem{remark}[theorem]{Remark}
\newcommand{\mb}[1]{\ensuremath{\mathbb{#1}}}
\newcommand{\Z}{\mb{Z}}
\newcommand{\F}{\mb{F}}
\algnewcommand\algorithmicinput{\textbf{Input:}}
\algnewcommand\INPUT{\item[\algorithmicinput]}
\algnewcommand\algorithmicoutput{\textbf{Output:}}
\algnewcommand\OUTPUT{\item[\algorithmicoutput]}
\algnewcommand\algorithmicoracle{\textbf{Oracle:}}
\algnewcommand\ORACLE{\item[\algorithmicoracle]}
\algnewcommand{\LineComment}[1]{\Statex \hskip\ALG@thistlm \(\triangleright\) #1}
\begin{document}

\title{Fast Computation of Isomorphisms Between Finite Fields Using Elliptic Curves}

	\author{Anand Kumar Narayanan\textsuperscript{1}}
	\address{\textsuperscript{1} Laboratoire d'Informatique de Paris 6, Sorbonne Universite, Paris.}
	\footnotetext{Supported by NSF grant \#CCF-1423544 and European Union's H2020 Programme (grant agreement \#ERC-669891).}
	
	\email{anand.narayanan@lip6.fr}
	\setcounter{Maxaffil}{0}
	\renewcommand\Affilfont{\itshape\small}
	\maketitle
	\vspace{-10pt}
	
%	\footnotetext{\textsuperscript{1} Supported by NSF grant \#CCF-1423544, Chris Umans' Simons Foundation Investigator grant and European Union's H2020 Programme under grant agreement number ERC-669891.}

\begin{abstract}
We propose a randomized algorithm to compute isomorphisms between finite fields using elliptic curves. To compute an isomorphism between two fields of cardinality $q^n$, our algorithm takes $$n^{1+o(1)} \log^{1+o(1)}q + \max_{\ell} \left(\ell^{n_\ell + 1+o(1)} \log^{2+o(1)} q + O(\ell \log^5q)\right)$$
%$$\widetilde{O}\left(n \log^2 q + \max_{\ell} (\ell^{n_\ell + 1} \log^2 q + \ell \log^5q)\right)$$
time, where $\ell$ runs through primes dividing $n$ but not $q(q-1)$ and $n_\ell$ denotes the highest power of $\ell$ dividing $n$.
Prior to this work, the best known run time dependence on $n$ was  quadratic. Our run time dependence on $n$ is at worst quadratic but is subquadratic if $n$ has no large prime factor. 
%Prior to this work, the best known running time exponent on $n$ for computing a finite field isomorphism was nearly quadratic. Our exponent is never worse and is significantly smaller unless $n$ has a large prime factor. % Our exponent is never worse and is often (unless $n$ has a large prime factor) significantly smaller.
In particular, the $n$ for which our run time is nearly linear in $n$ have natural density at least $3/10$.
% For instance, $n$ for which our run time is nearly linear in $n$ has natural density at least $3/10$. %Our running time is no worse and for several natural infinite families of $(q,n)$ nearly linear in $n$. 
%For several families of $(q,n)$, such as when $n$ is $\mathcal{O}(\sqrt{n})$ smooth or when $n$ is supported above primes dividing $q(q-1)$, our running time is nearly linear in $n$. % The case when $n$ is $\mathcal{O}(\sqrt{n})$ smooth has positive density in the integers.
%Prior to this work, the best known running time dependence on $n$ was nearly quadratic. 
%Our approach is to reduce computing isomorphisms to computing discrete logarithms in certain small torsion subgroups of elliptic curves over finite fields. The crux of the reduction is finding preimages under the Lang map.  
The crux of our approach is finding a point on an elliptic curve of a prescribed prime power order or equivalently finding preimages under the Lang map on elliptic curves over finite fields. We formulate this as an open problem whose resolution would solve the finite field isomorphism problem with run time nearly linear in $n$.
% between finite fields.%$(n^{1+o(1)})(\log q)^{O(1)}$ time algorithm.  \\ \\
%We describe a randomized algorithm that given a polynomial over a finite field $\F_q$ and a positive integer $n$, finds a root in $\F_{q^n}$.
%We describe a randomized algorithm that given a degree $m$ polynomial over a finite field $\F_q$ and a positive integer $n$, finds a root in $\F_{q^n}$. Construction of the field extension $\F_{q^n}$ could either be left to the algorithm or given explicitly as an input. The latter allows us to compute an embedding of a finite field in another. In particular, when presented with two explicitly presented finite fields of the same cardinality, it allows us to compute an isomorphim. \\ \\
%The expected running time of our algorithm is $(m^{1+o(1)}+n^{1+o(1)}+ \max_\ell \ell^{a_\ell+1}) (\log q)^{O(1)}$ where the $\max$ is taken over primes $\ell$ dividing $n$ but not $q(q-1)$ and $a_\ell$ is the largest power of $\ell$ dividing $n$. Prior to this work, the best known running time was $(mn)^{1+o(1)} (\log q)^{1+o(1)}$. Since $m$ may be assumed to be at least $n$ without loss of generality, our running time is no worse. Except when $n$ has a large prime factor, our running time is significantly faster.\\ \\ 
%The crux of our approach is finding preimages under the Lang map on elliptic curves over finite fields. We formulate a computational problem concerning Lang maps whose resolution would yield a $(m^{1+o(1)}+n^{1+o(1)})(\log q)^{O(1)}$ time algorithm for our root finding problem.  \\ \\
\end{abstract}
%\part{Use this type of header for very long papers only}
% use lowercase except for proper names

\section{Introduction}
\subsection{Computing Isomorphisms between Finite Fields}
Every finite field has prime power cardinality, for every prime power there is a finite field of that cardinality and every two finite fields of the same cardinality are isomorphic. This now well known result due to Moore \cite{moo} poses two algorithmic problems. The first concerns field construction: given a prime power, construct a finite field of that cardinality. The second is the isomorphism problem: compute an isomorphism between two explicitly presented finite fields of the same cardinality.\\ \\
Field construction is performed by constructing an irreducible polynomial of appropriate degree over the underlying prime order field, with all known efficient unconditional constructions requiring randomness. The fastest known construction, due to Couveignes and Lercier \cite{cl} uses elliptic curve isogenies. In practice, a polynomial is chosen at random and tested for irreducibility \cite{ben}. Such non canonical construction of finite fields motivates the isomorphism problem in several applications. For instance in cryptography, the discrete logarithm problem over small characteristic finite fields is often posed over fields constructed using random irreducible polynomials. In cryptanalysis, the quasi-polynomial algorithm \cite{bgjt} for discrete logarithms works over fields constructed using irreducible polynomials of a special form. An isomorphism computation is thus  required as a preprocessing step in cryptanalysis.\\ \\% No deterministic polynomial time algorithms for field construction are known.\\ \\
%Remarkably, the isomorphism problem was shown to be in deterministic polynomial time by Lenstra \cite{len}. There are faster randomized algorithms. 
Zierler noted that the isomorphism problem reduces to root finding over finite fields and hence has efficient randomized algorithms \cite{zie}. Remarkably, the isomorphism problem was shown to be in deterministic polynomial time by Lenstra \cite{len}. Allombert \cite{all} proposed a linear algebraic randomized algorithm, close in spirit with Lenstra's algorithm but markedly faster. Employing the (randomized) polynomial factorization algorithm of Kaltofen and Shoup \cite{ks} (implemented using Kedlaya-Umans fast modular composition \cite{ku}) to find roots, Zierler's approach yields the fastest previously known algorithm for computing isomorphisms. Our main result is an algorithm with improved run time in most cases.\\ \\
An alternate approach relying on cyclotomy instead of root finding was introduced by Pinch \cite{pin} and improved upon by Rains \cite{rai} to give the fastest algorithm in practice. The cyclotomic method of Pinch requires that the finite fields in question contain certain small order roots of unity. To remove this requirement, Pinch \cite{pin} proposed using elliptic curves over finite fields. This way, instead  of roots of unity, one seeks rational points of small order on elliptic curves. Our algorithm, although very different, relies on elliptic curves as well. We take inspiration from the aforementioned algorithm of Couveniges and Lercier \cite{cl}. While \cite{cl} used elliptic curve isogenies to solve field construction in nearly linear time, we solve the isomorphism problem. Our algorithm may also be viewed as an extension of Allombert's \cite{all} using elliptic curves.\\ \\
A critical component of our approach is a method to reduce the isomorphism problem for arbitrary degrees to prime power degrees in nearly linear time. The reduction is mostly subtle linear algebra and similar to a theorem of Shoup \cite{sho_ir}[Theorem 5]\footnote{We thank an anonymous referee for pointing out the similarity}. We invoke elliptic curves only to solve the prime power cases. \\ \\
Soon after posting a preprint version of the current paper online \cite{nar}, I was informed of concurrent related work that later appeared here \cite{bddfs}. Therein Brieulle, De Feo, Doliskani, Flori and Schost address the very same isomorphism problem (and more generally finite field embedding problems) using elliptic curve based techniques similar to ours. In contrast to our emphasis on establishing complexity theoretic bounds on the isomorphism problem, their goals are directed towards obtaining fast practical algorithms. In particular, they present an open source implementation of their algorithm which appears to be the current state of the art in practice.
\subsection{Computing Isomorphisms and Root Finding}
We formally pose the isomorphism problem stating the manner in which the input fields and the output isomorphism are represented. 
Let $q$ be a power of a prime $p$ and let $\F_q$ denote the finite field with $q$ elements. Fix an algebraic closure $\overline{\F}_q$ of $\F_q$ and let $\sigma:\overline{\F}_q \longrightarrow \overline{\F}_q$ denote the $q^{th}$ power Frobenius endomorphism.  We consider two finite fields of cardinality $q^n$ to be given through two monic irreducible degree $n$ polynomials $f(x),g(x) \in \F_q[x]$. The fields are then constructed as $\F_q(\alpha)$ and $\F_q(\beta)$ where $\alpha,\beta \in \overline{\F}_q$ are respectively roots of $f(x),g(x)$. Without loss of generality \cite{cl}, all our algorithms assume the base field $\F_q$ to be given as the quotient of the polynomial ring over $\Z/p\Z$ by a monic irreducible polynomial over $\Z/p\Z$.\\ \\
An isomorphism $\phi: \F_q(\alpha) \longrightarrow \F_q(\beta)$ that fixes $\F_q$ is completely determined by the image $\phi(\alpha)$. We call the unique $r_\phi(x) \in \F_q[x]$ of degree less than $n$ such that $\phi(\alpha)=r_\phi(\beta)$ as the polynomial representation of $\phi$. We are justified in seeking the polynomial representation of $\phi$ since given $r_\phi(x)$, one may compute the image of an element in $\F_q(\alpha)$ under $\phi$ in time nearly linear in $n$ using fast modular composition \cite{ku}. For an $r(x) \in \F_q[x]$ of degree less than $n$, $r(x)$ is the polynomial representation of an isomorphism from $\F_q(\alpha)$ to $\F_q(\beta)$ if and only if $r(\beta)$ is a root of $f(x)$. Hence the problem of computing the polynomial representation of an isomorphism that fixes $\F_q$ is identical to the following root finding problem.\\ \\
%An isomorphism $\phi: \F_q(\beta) \longrightarrow \F_q(\alpha)$ that fixes $\F_q$ is completely determined by the image $\phi(\beta)$. We call the unique $r_\phi(x) \in \F_q[x]$ of degree less than $n$ such that $\phi(\beta)=r_\phi(\alpha)$ as the polynomial representation of $\phi$. We are justified in seeking the polynomial representation of $\phi$ since given $r_\phi(x)$, one may compute the image of an element in $\F_q(\beta)$ under $\phi$ in time nearly linear in $n$ using fast modular composition \cite{ku}. For an $r(x) \in \F_q[x]$ of degree less than $n$, $r(x)$ is the polynomial representation of an isomorphism from $\F_q(\beta)$ to $\F_q(\alpha)$ if and only if $r(\beta)$ is a root of $f(x)$. Hence the problem of computing the polynomial representation of an isomorphism that fixes $\F_q$ is identical to the following root finding problem.\\ \\  
% \ref{root_finding_problem}.
% Unlike \cite{len}, where isomorphisms are computed in matrix form, we look for the  polynomial representation.
%\begin{isomorphism}\label{root_finding_problem}
%Given monic irreducibles $f(x),g(x) \in \F_q[x]$ of degree $n$, find a root of $f(x)$ in $\F_q(\beta)$ where $\beta \in \overline{\F}_q$ is a root of $g(x)$.
%\end{isomorphism}
\textsc{Isomorphism Problem}: Given monic irreducibles $f(x),g(x) \in \F_q[x]$ of degree $n$, find a root of $f(x)$ in $\F_q(\beta)$ where $\beta \in \overline{\F}_q$ is a root of $g(x)$.\\ \\
%\begin{problem}\label{root_finding_problem}
%Given monic irreducibles $f(x),g(x) \in \F_q[x]$ of degree $n$, find a root of $f(x)$ in $\F_q(\beta)$ where $\beta \in \overline{\F}_q$ is a root of $g(x)$.
%\end{problem}
There are two input size parameters, namely $n$ and $\log q$. Prior to our work, the best known run time was quadratic in $n$ resulting from using \cite{ks,ku} to find roots in the \textsc{Isomorphism Problem}. We are primarily interested in lowering the run time exponent in $n$. Our run time dependence on $\log q$ will be polynomial but not optimized for. Here on, all our algorithms are Las Vegas randomized and by run time we mean expected run time. 
%Further, run times are stated using soft $\widetilde{O}$ notation that suppresses $n^{o(1)}$ and $\log^{o(1)} q$ terms. 
%\noindent An algorithm of Kaltofen and Shoup \cite{ks_h}, designed to factor polynomials over large degree extensions, implemented using the fast modular composition of Kedlaya and Umans \cite{ku} solves  problem \ref{root_finding_problem} in $n^{2+o(1)} (\log q)^{1+o(1)} + n^{1+o(1)} (\log q)^{2+o(1)}$ expected time.
\subsection{Summary of Results}
We present an algorithm for the \textsc{Isomorphism Problem} with run time
$$n^{1+o(1)} \log^{1+o(1)}q + \max_{\ell} \left(\ell^{n_\ell + 1+o(1)} \log^{2+o(1)} q + O(\ell \log^5q)\right)$$
% $$\widetilde{O}\left(n \log^2 q + \max_{\ell} (\ell^{n_\ell + 1} \log^2 q + \ell \log^5q)\right)$$ 
where $\ell$ runs through primes dividing $n$ but not $q(q-1)$ with $n_\ell$ the highest power of $\ell$ dividing $n$. Evidently, our run time depends on the prime factorization of $n$. Although at worst quadratic in $n$, we next argue it is subquadratic for \textit{most} $n$. 
% For instance, the natural density of $n$ with exponent $1.81$ is at least $9/10$ and the natural density of $n$ with linear exponent is at least $3/10$. \\ \\ 
%where $n= \prod_\ell \ell^{n_\ell}$ is the factorization of $n$ into prime powers.\\ \\% Throughout, we use $\widetilde{O}()$ to suppress $n^{o(1)}$ and $\log^{o(1)} q$ terms.
%\\ \\
% The maximum is taken over primes $\ell$ dividing $n$ but not $q(q-1)$.\\ \\
If $n$ has a large (say $\Omega(n)$) prime factor not dividing $q(q-1)$, our running time exponent in $n$ is $2$. In all other cases, it is less than $2$. %We improve on the quadratic exponent for \textit{most} $n$ since $n$ for which our exponent is $1.81$ has density at least $9/10$. Further, the density of $n$ for which the exponent is nearly linear is at least $3/10$. These density results are inferred from the distribution of smooth numbers. 
Call $n$ with largest prime factor at most $n^{1/c}$ as $n^{1/c}$-powersmooth. For $n^{1/c}$-powersmooth $n$ with $1<c\leq 2$, our run time exponent in $n$ is at most $2/c$. The natural density of $n^{1/c}$-powersmooth $n$ tends to the Dickman-de Bruin function $\rho(c)$ and for $1 < c \leq 2$, $\rho(c) = 1-\log c$ \cite{gra}. 
In particular, $n^{1/1.1}$-powersmooth $n$ have density $1- \log(1.1) > 9/10$. Hence the $n$ with run time exponent in $n$ at most $2/1.1 \approx 1.8$ have density at least $9/10$. Likewise,  $n^{1/2}$-powersmooth $n$ have density at least $3/10$. Hence the $n$ with run time linear in $n$ have density at least $3/10$.\\ \\
The paper is organized as follows. 
In \S~\ref{preliminary}, the \textsc{Isomorphism Problem} is reduced in linear time to subproblems, each one corresponding to a prime power $\ell^{n_\ell}$ dividing $n$. A key component in the reduction is a fast linear algebraic algorithm (Lemma \ref{toeplitz_lemma}) that takes a polynomial relation between two $\alpha,\beta \in \overline{\F}_q$ of the same degree and computes a root of the minimal polynomial of $\alpha$ in $\F_q(\beta)$. 
In \S~\ref{kummer_section}, subproblems corresponding to prime powers $\ell^{n_\ell}$ such that $\ell$ divides $q-1$ are solved in linear time using Kummer theory. Likewise, in \S~\ref{artin_schreier_section}, subproblems corresponding to powers of the characteristic $p$ are solved in linear time using Artin-Schreier theory. The key in both these special cases is a new recursive algorithm to evaluate the action of idempotents in the Galois group ring that appear in the proof of Hilbert's theorem 90. 
In \S~\ref{elliptic_section}, the generic case of a prime power $\ell^{n_\ell}$ where $\ell \nmid (q-1)p$ is handled using an elliptic curve $E/\F_q$ with $\F_q$ rational $\ell$ torsion. The analogue of Hilbert's theorem 90 in this context is Lang's theorem which states that the first cohomology group $H^1(\F_q,E)$ is trivial \cite{lan}. 
In \S~\ref{elliptic_discrete_log_subsection}, the \textsc{isomorphism problem} is reduced to computing discrete logarithms in the $\F_q$ rational $\ell$ torsion subgroup of $E$. The crux of the reduction is to compute a preimage of a non trivial $\F_q$ rational $\ell$ torsion point under the Lang map. In \S~\ref{lang_subsection}, we devise a fast algorithm to compute such a preimage using $\ell$ isogenies and solve the \textsc{Isomorphism Problem} of degree $\ell^{n_\ell}$ in $\ell^{n_\ell+1+o(1)} \log^{1+o(1)} q + O(\ell \log^5 q)$ time.  
In \S~\ref{lang_subsection}, we pose an algorithmic Problem \ref{problem_lang} concerning Lang's theorem, a solution to which would solve the \textsc{Isomorphism Problem} in subquadratic time for all $n$.\\ \\
% We pose this preimage computation as Problem \ref{problem_lang}, a solution to which would solve \textsc{Isomorphism Problem} in linear time. We hope \\ \\
%In fact our techniques extend to the problem of computing given a polynomial over a finite field $\F_q$ and a positive integer $n$, finds a root in $\F_{q^n}$.
Fast modular composition and fast modular power projection \cite{ku}, key ingredients in our algorithm, are considered impractical with no existing implementations. Practical implications of our algorithm are thus unclear.\\ \\% We sketch a practical quadratic run time variant of our algorithm without fast modular composition.\\ \\
We also extend our algorithm to solve the following more general root finding problem: given a polynomial over $\F_q$ and a positive integer $n$, find its roots in $\F_{q^n}$ (see Remark \ref{root_finding_remark}). The construction of $\F_{q^n}$ could be given or left to the algorithm. The former allows one to compute embeddings of one finite field in another.
%The algorithm of Rains \cite{rai} is currently the fastest in practice. 
\section{Reduction of the \textsc{Isomorphism Problem} to Prime Power Degrees}\label{preliminary}
For $\alpha \in \overline{\F}_q$, call $[\F_q(\alpha):\F_q]$ the degree of $\alpha$. For $\alpha,\beta \in \overline{\F}_q$, call $\alpha \sim \beta$ if and only if there is an integer $j$ such that $\alpha = \sigma^j(\beta)$. That is, $\alpha \sim \beta$ means they have the same minimal polynomial.
\begin{lemma}\label{toeplitz_lemma}
There is an %$\widetilde{O}(n \log q)$
$n^{1+o(1)} \log^{1+o(1)} q$ time algorithm
that given the minimal polynomial $g(x) \in \F_q[x]$ of an $\alpha \in \overline{\F}_q$ of degree $n$ and $f_1(x),f_2(x) \in\F_q[x]$ of degree less than $n$ such that $f_1(\alpha)$ is of degree $n$, finds an $r(x) \in \F_q[x]$ such that $r(\beta)$ is a root of $g(x)$ for all $\beta \in \overline{\F}_q$ satisfying $f_1(\alpha) \sim f_2(\beta)$.
\end{lemma}
\begin{proof}
Since $\alpha$ and $f_1(\alpha)$ both have degree $n$, $\alpha$ is in $\F_q(f_1(\alpha))$ and there is a unique $h(x)=\sum_{i=0}^{n-1}h_ix^i \in \F_q[x]$ such that $h(f_1(\alpha))= \alpha$. We next describe how to compute $h(x)$. \\ \\
Pick $u \in \F_q^n$ uniformly at random and consider the $\F_q$-linear functional $$\mathcal{U}:\F_q(\alpha) \longrightarrow \F_q, y \longmapsto u^ty $$   
%$$\ \ \ \ \ \ \ \ \ \ \ \ y \longmapsto u^ty $$
where $y \in \F_q^n$ is an element of $\F_q(\alpha)$ written in the standard basis $(1,\alpha,\alpha^2,\ldots, \alpha^{n-1})$.\\ \\
Abusing notation, let $h = (h_0,h_1,\ldots,h_{n-1})^t$ denote the coefficient vector of $h(x)$. We will determine $h(x)$ by solving the  linear system $$\mathcal{U}(\alpha^i h(f_1(\alpha))) = \mathcal{U}(\alpha f_1(\alpha)^i), i \in \{0,1,\ldots,2n-2\}$$
in its coefficients. Let $A$ be the $n$ by $2n-1$ matrix whose $i^{th}$ column consists of $f_1(\alpha)^{i-1}$ written in the standard basis. Multiplication by $\alpha$ is an $\F_q$ linear transformation on $\F_q(\alpha)$ with matrix representation on the standard basis being the companion matrix 
\[X:=
\begin{bmatrix}
    0 & 0 & 0 & \dots  & 0 & -g_0 \\
    1 & 0 & 0 & \dots  & 0 & -g_1 \\
    0 & 1 & 0 & \dots  & 0 & -g_2 \\
    \vdots & \vdots & \vdots & \ddots & \vdots & \vdots \\
    0 & 0 & 0 & \dots  & 1 & -g_{n-1} 
\end{bmatrix}
\] 
with respect to $g(x)=\sum_{i=0}^{n-1}g_ix^i + x^n$. Let $(a_0,a_1,\ldots,a_{2n-2}):= u^tA$ and $(b_0,b_1,\ldots,b_{2n-2}):= u^tXA$. Since $X$ has at most $2n-1$ non zero coefficients, $u^tX$ can be computed with number of $\F_q$-operations bounded linearly in $n$. Given $u$, $f_1(x)$ and $g(x)$, to compute $u^tA$ is an instance of the modular power projection problem. Likewise computing $u^tXA$ given $u^tX$, $f_1(x)$ and $g(x)$. By \cite{ku}, each of these modular power projection instances can be solved in $n^{1+o(1)} \log^{1+o(1)} q)$ time. The aforementioned linear system in matrix form is 
%\[
\begin{equation}\label{toeplitz_equation}
\begin{bmatrix}
    a_0 & a_1 & a_2 & \dots  & a_{n-1} \\
    a_1 & a_2 & a_3 & \dots  & a_n  \\
    a_2 & a_3 & a_4 & \dots  & a_{n+1} \\
    \vdots & \vdots & \vdots & \ddots  & \vdots \\
    a_{n-1} & a_{n} & a_{n+1} & \dots  & a_{2n-2} 
\end{bmatrix}
\begin{bmatrix}
h_0\\h_1\\ h_3\\ \vdots \\ h_{n-1}
\end{bmatrix}
=
\begin{bmatrix}
b_0\\b_1\\b_2\\ \vdots\\ b_{2n-2}
\end{bmatrix}
%\]
\end{equation}
and by \cite{sho_mp} has full rank with probability at least $1/2$ for a randomly chosen $u$. One of its solutions is the coefficient vector $h$ of the $h(x)$ we seek. Being Toeplitz, in $n^{1+o(1)}\log^{1+o(1)} q)$ time, we can test if it is full rank and if so find the solution $h$. Once $h(x)$ is found, using \cite[Corollary 1]{cl} to compose polynomials, within time stated in the lemma, we output $h(f_2(x))$ as $r(x)$. The output is correct since $h(f_2(\beta)) \sim h(f_1(\alpha)) = \alpha$.\end{proof}

\begin{lemma}\label{subfield_projection_lemma}
There is an algorithm that given the minimal polynomial $g(x) \in \F_q[x]$ of an $\alpha \in \overline{\F}_q$ of degree $m$ and a positive integer $n$ dividing $m$, finds an element $\alpha_n \in \F_q(\alpha)$ of degree $n$ and its minimal polynomial over $\F_q$ in  time $m^{1+o(1)} \log^{2+o(1)} q$.
\end{lemma}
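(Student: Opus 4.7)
The plan is to realize $\alpha_n$ as the image of a random element of $\F_q(\alpha)$ under the relative trace $\mathrm{Tr}_{\F_{q^m}/\F_{q^n}}$, and then recover its minimal polynomial via fast modular power projection together with Berlekamp--Massey. Throughout, the crucial primitive is Kedlaya--Umans fast modular composition \cite{ku}, which, once the polynomial representation of an iterate of Frobenius is known, allows one to apply that iterate to any element of $\F_q(\alpha)$ in $\widetilde{O}(m)$ operations in $\F_q$.

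First I would compute the polynomial $P(x) \in \F_q[x]$ of degree less than $m$ with $P(\alpha) = \alpha^{q^n}$. To do so, compute $\sigma(\alpha) = \alpha^q \bmod g(x)$ by fast exponentiation in $\F_q[x]/(g(x))$, which costs $\widetilde{O}(m \log^2 q)$ bit operations; then obtain $\sigma^n(\alpha) = P(\alpha)$ by the doubling identity $\sigma^{2k}(\alpha) = \sigma^k(\sigma^k(\alpha))$, each doubling being one modular composition of cost $\widetilde{O}(m \log q)$. Setting $\tau := \sigma^n$, pick $\gamma \in \F_q(\alpha)$ uniformly at random and compute
\[
\alpha_n := \mathrm{Tr}_{\F_{q^m}/\F_{q^n}}(\gamma) = \sum_{i=0}^{m/n-1}\tau^i(\gamma)
\]
using the recursive doubling $T_{2k}(\gamma) = T_k(\gamma) + \tau^k(T_k(\gamma))$ after precomputing polynomial representations of $\tau^{2^i}(\alpha)$ for $0 \le i \le \lfloor \log_2(m/n) \rfloor$ by repeated modular composition; a standard binary expansion handles $m/n$ not a power of two. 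This uses $O(\log(m/n))$ modular compositions, totaling $\widetilde{O}(m \log q)$ bit operations. Since $\mathrm{Tr}_{\F_{q^m}/\F_{q^n}}$ is $\F_q$-linear and surjective, $\alpha_n$ is uniformly distributed in $\F_{q^n}$, so with probability bounded below by a positive constant it generates $\F_{q^n}$ over $\F_q$: the non-generators lie in $\bigcup_{p \mid n, \, p \text{ prime}} \F_{q^{n/p}}$, of total size at most $n q^{n/2}$.

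To recover the minimal polynomial of $\alpha_n$, draw a random $\F_q$-linear functional $\mathcal{U}$ on $\F_q(\alpha)$ and apply fast modular power projection \cite[Thm.~7.7]{ku} to compute $\mathcal{U}(\alpha_n^i)$ for $i = 0, 1, \ldots, 2n$ in $\widetilde{O}(m \log q)$ bit operations; Berlekamp--Massey on the resulting sequence then recovers a candidate minimal polynomial $m_{\alpha_n}(x)$ of degree at most $n$ in $\widetilde{O}(n \log q)$ additional time. If $\deg m_{\alpha_n} < n$, retry with a fresh $\gamma$; the expected number of trials is $O(1)$. Summing the costs, the bottleneck is the initial exponentiation $\alpha^q \bmod g(x)$, for a total of $\widetilde{O}(m \log^2 q)$, as required.

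The main obstacle to making this work within budget is ensuring that every Frobenius/trace manipulation runs in $\widetilde{O}(m)$ operations in $\F_q$ rather than the naive $\widetilde{O}(mn)$; this rests entirely on Kedlaya--Umans fast modular composition and on the doubling structure of both $\sigma^n(\alpha)$ and $T_{m/n}(\gamma)$. A secondary concern is correctness of the two probabilistic subroutines --- that a random trace is a field generator of $\F_{q^n}$ and that random power projection exposes the minimal polynomial via Berlekamp--Massey --- both of which are routine Wiedemann-style arguments that need to be verified to give constant success probability uniformly in $q$ and $n$.
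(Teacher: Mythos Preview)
Your proposal is correct and follows essentially the same approach as the paper: project a random element of $\F_q(\alpha)$ to $\F_{q^n}$ via the relative trace computed with the von zur Gathen--Shoup doubling trick and fast modular composition, then recover the minimal polynomial by Shoup's power-projection/Berlekamp--Massey method, retrying if the degree falls short. The paper's proof is terser, simply citing \cite[Alg.~5.2]{gs}, \cite[Cor.~7.2]{ku}, and \cite{sho_mp}/\cite[\S 8.4]{ku} for the trace and minimal-polynomial steps respectively, whereas you spell out the doubling recursions and the cost accounting explicitly; the substance is the same.
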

\begin{proof}
Pick $\beta \in \F_q(\alpha)$ uniformly at random and set $\alpha_n:=\sum_{i=0}^{m/n-1}\sigma^{ni}(\beta)$, the trace of $\beta$ down to $\F_{q^n} \subseteq \F_q(\alpha)$. By iterated Frobenius \cite{gs}\cite{ku}, this trace computation can be performed in the time stated in the lemma. Compute the minimal polynomial $M(x)\in \F_q[x]$ of $\alpha_n$ over $\F_q$ using \cite{sho_mp}\cite[\S~8.4]{ku}, again, in time stated in the lemma. If the degree of $M(x)$ is $n$, output $\alpha_n$ and $M(x)$. Since the trace down to $\F_{q^n}$ maps a random element from $\F_q(\alpha)$ to a random element in $\F_{q^n}$, we succeed with probability at least $1/2$.
\end{proof}
\noindent We next reduce \textsc{Isomorphism Problem} to itself restricted to prime power input degree.
\begin{lemma}\label{prime_power_lemma} 
Let $n=\prod_{\ell} \ell^{n_\ell}$ be the factorization of $n$ into prime powers. In $n^{1+o(1)} \log^{2+o(1)} q$ time, \textsc{Isomorphism Problem} with inputs of degree $n$ may be reduced to identical problems; one for each prime $\ell$ dividing $n$ with inputs of degree $\ell^{n_\ell}$.
\end{lemma}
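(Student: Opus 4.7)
The plan is to apply Lemma \ref{subfield_projection_lemma} to split each input field into its unique coprime-degree subfields, recursively solve at each prime power, and glue via the compositum. For each prime $\ell \mid n$ I would invoke Lemma \ref{subfield_projection_lemma} once on $\alpha$ and once on $\beta$ to produce $\alpha_\ell \in \F_q(\alpha)$ and $\beta_\ell \in \F_q(\beta)$ of degree $\ell^{n_\ell}$, along with their minimal polynomials $f_\ell, g_\ell \in \F_q[x]$. Since $n$ has $O(\log n / \log\log n)$ distinct prime divisors and each call runs in $\widetilde{O}(n \log^2 q)$, this extraction stage fits within the claimed budget. Each pair $(f_\ell, g_\ell)$ then constitutes an Isomorphism Problem instance of degree $\ell^{n_\ell}$, whose solution is a polynomial $r_\ell(x) \in \F_q[x]$ with $\alpha_\ell \sim r_\ell(\beta_\ell)$ and which the reduction delegates to the sub-problem.

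To reassemble, observe that the subfields $K_\ell := \F_q(\alpha_\ell) \subseteq \F_q(\alpha)$ have pairwise coprime degrees with product $n$, hence are linearly disjoint over $\F_q$ and have compositum $\F_q(\alpha)$; the analogous statement holds for $K'_\ell := \F_q(\beta_\ell) \subseteq \F_q(\beta)$. The subfield isomorphisms $\phi_\ell : K_\ell \to K'_\ell$ sending $\alpha_\ell$ to $r_\ell(\beta_\ell)$ therefore extend, via the tensor identification $\bigotimes_\ell K_\ell \xrightarrow{\sim} \F_q(\alpha)$, to a global $\F_q$-algebra isomorphism $\phi : \F_q(\alpha) \to \F_q(\beta)$. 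I would draw independent random $c_\ell \in \F_q$ and form $A := \sum_\ell c_\ell \alpha_\ell \in \F_q(\alpha)$ along with its image $B := \phi(A) = \sum_\ell c_\ell\, r_\ell(\beta_\ell) \in \F_q(\beta)$; computing $B$ reduces to $O(\log n)$ modular compositions of the $r_\ell$ with the known polynomial expressions of the $\beta_\ell$ in $\beta$, each costing $\widetilde{O}(n \log q)$ via \cite[Cor. 7.2]{ku}. Because $\phi$ is an isomorphism, $A$ and $B$ share the same minimal polynomial, so writing $A = f_1(\alpha)$ and $B = f_2(\beta)$ and feeding $\alpha, \beta, f_1, f_2$ into Lemma \ref{toeplitz_lemma} returns the desired $r(x)$ with $\alpha \sim r(\beta)$ in a further $\widetilde{O}(n \log q)$.

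The delicate point, which I expect to be the main obstacle, is verifying that the random $A$ is a primitive element of $\F_q(\alpha)/\F_q$; Lemma \ref{toeplitz_lemma} needs $\deg_{\F_q} A = n$, and this also ensures $B = \phi(A)$ generates $\F_q(\beta)$. The bad locus of $(c_\ell)$ consists of those sending $A$ into some fixed field $\F_q(\alpha)^{\langle \sigma^j \rangle}$ with $1 \le j < n$; linear disjointness of the $K_\ell$ makes $(c_\ell) \mapsto A$ injective, so a union bound over divisors of $n$ gives constant success probability provided $q$ is not extremely small, and primitivity can be verified in $\widetilde{O}(n \log^2 q)$ by the minimum polynomial computation underlying Lemma \ref{subfield_projection_lemma}. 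Resampling $(c_\ell)$ on failure preserves the claimed expected running time, and the residual small-$q$ regime can be handled by temporarily drawing the $c_\ell$ from a modest extension $\F_{q^k}$.
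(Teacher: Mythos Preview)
Your proposal is correct and follows essentially the same route as the paper: extract degree-$\ell^{n_\ell}$ elements $\alpha_\ell,\beta_\ell$ via Lemma~\ref{subfield_projection_lemma}, solve the subproblems, and glue by applying Lemma~\ref{toeplitz_lemma} to a relation of the form $\sum_\ell c_\ell\alpha_\ell \sim \sum_\ell c_\ell\beta'_\ell$. The one place you diverge is the randomization of the $c_\ell$, which is unnecessary: the paper simply takes $c_\ell=1$ and observes that linear disjointness of the $\F_{q^{\ell^{n_\ell}}}$ forces $\sum_\ell \alpha_\ell$ to have degree exactly $n$ (indeed, if $\sigma^d$ fixed the sum then each $\sigma^d(\alpha_\ell)-\alpha_\ell$ would lie in $\F_q$ and these constants would have to sum to zero, forcing each to vanish). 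So your ``delicate point'' dissolves once you note that any choice with all $c_\ell\neq 0$ works, and the resampling, primitivity test, and small-$q$ workaround can all be dropped.
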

\begin{proof}
Consider an input $f(x),g(x)\in \F_q[x]$ to \textsc{Isomorphism Problem}. Let $\alpha,\beta \in \overline{\F}_q$ respectively be roots of $f(x),g(x)$. Compute the factorization $n=\prod_{\ell} \ell^{n_\ell}$ of $n$ into prime powers. For each prime $\ell$ dividing $n$, using Lemma \ref{subfield_projection_lemma}, compute $\alpha_\ell \in \F_q(\alpha)$ and $M_\ell(x) \in \F_q[x]$ such that $\alpha_\ell$ has degree $\ell^{n_\ell}$ and $M_\ell$ is the minimal polynomial of $\alpha_\ell$. Likewise compute $\beta_\ell \in \F_q(\beta)$ and $N_\ell(x) \in \F_q[x]$ such that $\beta_\ell$ has degree $\ell^{n_\ell}$ and $N_\ell$ is the minimal polynomial of $\beta_\ell$. 
Since $\F_{q^{\ell^{n_\ell}}}$ and $\F_{q^{n/\ell^{n_\ell}}}$ are linearly disjoint over $\F_q$, both $\sum_{\ell | n} \alpha_\ell$ and $\sum_{\ell | n}\beta_\ell$ have degree $n$. For each $\ell$ dividing $n$, solve \textsc{Isomorphism Problem} with input $M_\ell(x),N_\ell(x)$ and find a root $\beta^\prime_\ell$ of $M_\ell(x)$ in $\F_q(\beta_\ell)$. Now for all $\ell$ dividing $n$, $\alpha_\ell \sim \beta^\prime_\ell$. Applying Lemma \ref{toeplitz_lemma} to the relation $\sum_{\ell | n }\alpha_\ell \sim \sum_{\ell|n}\beta^\prime_\ell$, we solve \textsc{Isomorphism Problem} with input $f(x),g(x)$.\end{proof}

\begin{remark}\label{root_finding_remark}
Consider the problem of finding a root of a degree $m$ polynomial $f(x) \in \F_q[x]$ in $\F_{q^n}$, where $\F_{q^n}$ is constructed as $\F_q[x]/(g(x))$ for a monic irreducible $g(x)$. Either $g(x)$ is given or constructed in linear time using \cite{cl}. We show that this problem reduces to the \textsc{Isomorphism Problem} in time linear in $m$ and $n$. In fact, the reduction finds not just one but all the roots of $f(x)$ in an implicit form. The output is a set of roots of $f(x)$ whose orbit under $\sigma$ is the set of all roots of $f(x)$. For $f(x)$ to have a root in $\F_{q^n}$, $f(x)$ has to have an irreducible factor of degree dividing $n$. Since the number of factors of $n$ is at most $\log n$, using \cite{ku}, in $m^{1+o(1)} \log^{2+o(1)} q \log^{1+o(1)} n$ time, we may enumerate all irreducible factors of $f(x)$ of degree dividing $n$. For each such irreducible factor $h(x)$, using Lemma \ref{subfield_projection_lemma}, identify a subfield of $\F_{q^n}$ and find a root $h(x)$ in the subfield by solving the \textsc{Isomorphism Problem}.
%We may choose one among these factors and find its root. Hence without loss of generality, we may assume that $h(x)$ is monic irreducible and that $n$ divides $m$. Further, by Lemma \ref{subfield_projection_lemma}, we may identify a subfield $\F_{q^m} \subseteq \F_{q^n}$ to compute roots in. In summary, the problem of finding a root of a degree $m$ polynomial $f(x) \in \F_q[x]$ in $\F_{q^n}$ reduces to \textsc{Isomorphism Problem}. \\ \\
%For a positive integer $d$, let $\F_{q^d}$ denote the degree $d$ extension of $\F_q$ in $\overline{\F}_q$. The isomorphism problem \ref{root_finding_problem} is a special case of finding a root of polynomial $f(x) \in \F_q[x]$ in $\F_{q^n}$. We obtain a fast algorithm for this more general root finding problem since it reduces to the isomorphism problem \ref{root_finding_problem} in nearly linear time (see remark \ref{root_finding_remark}). \\ \\
%In fact, we find not just one but all roots (implicitly). The output is a subset of $\F_{q^n}$ whose orbit under $\sigma$ is the set of all roots of $f(x)$.% The algorithm assumes that $\F_{q^n}$ is constructed as $\F_q(\alpha)$ where $\alpha \in \overline{\F}_q$ is a root of a given monic irreducible $g(x) \in \F_q[x]$. This assumption is without loss of generality for if $g(x)$ is not given but instead its degree $n$ is, then a degree $n$ irreducible polynomial can be constructed in time nearly linear in $n$ \cite{cl}.
% If $g(x)$ is not given but instead its degree $n$ is, then a degree $n$ irreducible $g(x)$ can be constructed in time nearly linear in $n$ \cite{cl}.
\end{remark}

\section{Root Finding in Kummer Extensions of Finite Fields}\label{kummer_section}
Using Kummer theory, we solve the \textsc{Isomorphism Problem} restricted to the case when $n$ is a power of a prime $\ell$ dividing $q-1$. The novelty here is a fast recursive evaluation of the idempotent appearing in the standard proof of (cyclic) Hilbert's theorem 90. 
\begin{lemma}\label{hilbert_lemma} There is an algorithm that given a finite extension $L/\F_q$, an integer $m \leq [L:\F_q]$ and a $\zeta \in L$ such that $\zeta \in K:=\{\beta \in L|\sigma^m(\beta)=\beta\}$ and $\zeta^{[L:K]}=1$, finds an $\alpha \in L$ such that $\sigma^m(\alpha)=\zeta\alpha$ in $[L:\F_q]^{1+o(1)} \log^{2+o(1)} q$ time.
\end{lemma}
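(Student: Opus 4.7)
The plan is to algorithmically realize the idempotent from the classical proof of Hilbert's theorem~90 and evaluate it via binary doubling combined with fast iterated Frobenius. Set $d := [L:K]$ and $\theta := \zeta^{-1}\sigma^m$; this is a well-defined $\F_q$-linear endomorphism of $L$ because $\zeta^{-1} \in K$ commutes with $\sigma^m$. Define
\[
R_d := 1 + \theta + \theta^2 + \cdots + \theta^{d-1}.
\]
Since $\theta^d = \zeta^{-d}\sigma^{md} = 1$ on $L$, we have $(\theta - 1)R_d = 0$, so every $\alpha$ in the image of $R_d$ satisfies $\sigma^m(\alpha) = \zeta\alpha$. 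Artin's linear independence of the distinct automorphisms $\sigma^{im}|_L$ of $L$ over $K$ forces $R_d \neq 0$, and its image is the one-dimensional $\zeta$-eigenspace of $\sigma^m$; thus $\ker R_d$ has $K$-codimension one, and a uniformly random $\gamma \in L$ yields $\alpha := R_d(\gamma) \neq 0$ with probability at least $1 - 1/|K| \geq 1/2$. This gives the skeleton of a Las Vegas algorithm.

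For fast evaluation of $R_d(\gamma)$, I will use the binary recursion
\[
R_{2e}(\gamma) = \delta + \theta^e(\delta) \text{ with } \delta := R_e(\gamma), \qquad R_{2e+1}(\gamma) = R_{2e}(\gamma) + \theta^{2e}(\gamma),
\]
terminating at $R_1(\gamma) = \gamma$. This recursion has depth $O(\log d)$ and performs in total only $O(\log d)$ applications of operators of the form $\theta^k = \zeta^{-k}\sigma^{mk}$. The main obstacle, and what supplies the $\log^2 q$ factor in the running time, is applying each such $\theta^k$ to an element of $L$ in $\widetilde{O}([L:\F_q]\log q)$ bit operations. To achieve this I maintain $\sigma^{mk}$ in the polynomial form $y_k := x^{q^{mk}} \bmod g(x)$, where $g$ is the minimal polynomial defining $L$ over $\F_q$.

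Computing $y_1$ costs $\widetilde{O}([L:\F_q]\log^2 q)$ bit operations: fast exponentiation produces $x^q \bmod g$ in $\widetilde{O}([L:\F_q]\log q)$ field operations, and $O(\log m)$ self-compositions then yield $y_1$. Any further $y_k$ required by the recursion is assembled from $y_1$ via the composition identity $y_{a+b} = y_a(y_b) \bmod g$ in $O(\log k)$ Kedlaya--Umans fast modular compositions \cite[Cor.~7.2]{ku}, each $\widetilde{O}([L:\F_q])$ field operations. Once $y_k$ is in hand, $\theta^k(\beta) = \zeta^{-k}\,\beta(y_k) \bmod g$ is one more modular composition plus a scalar multiplication, totalling $\widetilde{O}([L:\F_q]\log q)$ bit operations. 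Summing over the $O(\log d)$ recursive steps and adding the one-time cost of $y_1$ gives $\widetilde{O}([L:\F_q]\log^2 q)$ total bit complexity, with the $\log d$ factors from recursion depth and composition chains absorbed into $\widetilde{O}$ because $d \leq [L:\F_q]$. A final verification that $\alpha \neq 0$ and $\sigma^m(\alpha) = \zeta\alpha$, together with resampling of $\gamma$ on failure, makes the procedure Las Vegas with $O(1)$ expected trials.
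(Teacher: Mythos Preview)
Your proof is correct and follows essentially the same approach as the paper: both define the endomorphism $\zeta^{-1}\sigma^m$, apply the Hilbert~90 projector $\sum_{i=0}^{d-1}(\zeta^{-1}\sigma^m)^i$ to a random element, and evaluate the resulting sum via a binary doubling recursion combined with fast iterated Frobenius/modular composition. Your binary recursion $R_{2e},R_{2e+1}$ is a slightly cleaner variant of the paper's ``double to the largest power of two, then recurse on the remainder'' scheme, but the idea and the complexity analysis are the same.
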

\begin{proof}
Since the norm of $\zeta$ from $L$ down to $K$ is $\zeta^{[L:K]}=1$, an $\alpha$ as claimed in the lemma exists by Hilbert's theorem $90$ applied to the cyclic extension $L/K$. We next describe an algorithm that finds such an $\alpha$ in the stated time.\\ \\
Define $\tau:=\zeta^{-1}\sigma^m$, viewed as a $K$-linear endomorphism on $L$. By independence of characters, $\sum_{i=0}^{[L:K]-1}\tau^i$ is non zero. Pick $\theta \in L$ uniformly at random. If $\sum_{i=0}^{[L:K]-1} \tau^i(\theta) \neq 0$ (which happens with probability at least $1/2$), set $\alpha = \sum_{i=0}^{[L:K]-1} \tau^i(\theta)$. Since $\zeta^{-1} \in \F_q$ and $\zeta^{-[L:K]}=1$, $$ \tau(\alpha)= \sum_{i=0}^{[L:K]-1}\zeta^{-i}\sigma^{mi}(\alpha) = \alpha \Rightarrow \tau(\alpha) = \alpha \Rightarrow \zeta^{-1}\sigma^m(\alpha) = \alpha \Rightarrow \sigma^m(\alpha)=\zeta \alpha. $$
We next demonstrate $\sum_{i=0}^{[L:K]-1}\tau^i(\theta)$ can be computed fast given $\theta \in L$. Our approach is similar to the iterated Frobenius trace computation of von zur Gathen and Shoup \cite{gs}.\\ \\% using fast modular composition \cite{ku}.
Let $L$ be given as $\F_q(\eta)$ for some $\eta \in \overline{\F}_q$ with minimal polynomial $g(x) \in \F_q[x]$. By repeated squaring, in time $\widetilde{O}([L:\F_q]\log^2 q)$ compute $\eta^q$. 
For a positive integer $b$, let $\Sigma_b$ denote the partial sum $ \sum_{i=0}^{b-1}\tau^{i}(\theta)$. Our goal is to compute $\Sigma_{[L:K]}$. For every positive integer $b$, 
$$\sum_{i=0}^{2b-1}\tau^{i}(\theta) = \sum_{i=0}^{b-1}\tau^{i}(\theta) + \sum_{i=b}^{2b-1}\tau^{i}(\theta) = \sum_{i=0}^{b-1}\tau^{i}(\theta) + \tau^{b}\left( \sum_{i=0}^{2b-1}\tau^{i}(\theta)\right)$$
\begin{equation}\label{iterated_hilbert}
\Rightarrow \sum_{i=0}^{2b-1}\tau^{i}(\theta) = \sum_{i=0}^{b-1}\tau^{i}(\theta) + \zeta^{-b}\sigma^{b}\left( \sum_{i=0}^{b-1}\tau^i(\theta)\right) \Rightarrow \Sigma_{2b} = \Sigma_b+\zeta^{-b} \sigma^{bm}(\Sigma_b).
\end{equation}
Given $\Sigma_b$ and $\eta^q$, $\sigma^{bm}(\Sigma_b)$ can be computed in $[L:\F_q]^{1+o(1)} \log^{1+o(1)} q$ time using the Frobenius representation of \cite{gs} and fast modular composition \cite{ku}. Hence, given $\Sigma_b$, computing $\Sigma_{2b}$ using equation \ref{iterated_hilbert} takes $[L:\F_q]^{1+o(1)} \log^{1+o(1)} q$ time, which evidently is independent of $b$ and $m$.\\ \\ %This running time is independent of $b$ and $m$.
Set $c = \lfloor \log_2[L:K] \rfloor$ and compute $\Sigma_{2^c}$ by successively computing $\Sigma_0,\Sigma_2,\Sigma_4,\ldots, \Sigma_c$ using equation \ref{iterated_hilbert}. Since $c \leq \log_2[L:K]$, this takes $[L:\F_q]^{1+o(1)} \log^{1+o(1)} q$ time. If $[L:K]$ is not a power of $2$, we recursively compute $\Sigma_{[L:K]-c}$. With the knowledge of $\Sigma_c$ and $\Sigma_{[L:K]-c}$, $\Sigma_m$ may be computed in $\widetilde{O}([L:\F_q] \log q)$ time \cite{gs,ku} as 
\begin{equation}
\Sigma_{[L:K]} = \Sigma_c + \zeta^{-c}\sigma^{mc}(\Sigma_{[L:K]-c}).
\end{equation}
Since $[L:K]-c \leq [L:K]/2$, at most $\log_2[L:K]$ recursive calls are made in total.
\end{proof}
We next state the algorithm followed by proof of correctness and implementation details.
\begin{algorithm}[H]
\caption{Root Finding Through Kummer Theory:}\label{kummer_algorithm}
\begin{algorithmic}[1]
\INPUT Monic irreducibles $g_1(x),g_2(x) \in \F_q[X]$ of degree $\ell^a$ where $\ell$ is a prime dividing $q-1$ and $a$ is a positive integer.
%\OUTPUT A root of $g_1(x)$ in $\F_q[x]/(g_2(x))$ and a root of $g_2(x)$ in $\F_q[x]/(g_1(x))$.
\OUTPUT A root of $g_1(x)$ in $\F_q(\beta_2)$ where $\beta_2 \in \overline{\F}_q$ is a root of $g_2(x)$. 
\State Find a primitive $\ell^{th}$ root of unity $\zeta_\ell \in \F_q$.
\State Construct $\F_q(\beta_1) \cong \F_{q^{\ell^a}}$ where $\beta_1$ is a root of $g_1(x)$. 
\LineComment Apply lemma \ref{hilbert_lemma} with $\left(L=\F_q(\beta_1),m=\ell^{a-1},\zeta=\zeta_\ell\right)$ and find $\alpha_1 \in \F_q(\beta_1)$ such that $$\sigma^{\ell^{a-1}}(\alpha_1) = \zeta_\ell\alpha_1.$$
\LineComment Compute $\alpha_1^\ell$.  \textit{($\alpha_1^\ell$ will have degree $\ell^{a-1}$.)}
\State Construct $\F_q(\beta_2) \cong \F_{q^{\ell^a}}$ where $\beta_2$ is a root of $g_2(x)$. 
\LineComment Apply lemma \ref{hilbert_lemma} with $\left(L=\F_q(\beta_2),m=\ell^{a-1},\zeta=\zeta_\ell\right)$ and find $\alpha_2 \in \F_q(\beta_2)$ such that $$\sigma^{\ell^{a-1}}(\alpha_2) = \zeta_\ell\alpha_2.$$
\LineComment Compute $\alpha_2^\ell$. \textit{($\alpha_2^\ell$ will have degree $\ell^{a-1}$.)}
\State If $a=1$,
\LineComment Find an $e \in \F_q$ such that $e^\ell=\alpha_1^\ell/\alpha_2^\ell$. 
\LineComment Apply lemma \ref{toeplitz_lemma} to $\alpha_1 \sim e \alpha_2$ and find a root of $g_1(x)$ in $\F_q(\beta_2)$. %and a root of $g_2(x)$ in $\F_q(\alpha_1)$.
%\LineComment Apply lemma \ref{toeplitz_lemma} to $\alpha_1 = e \alpha_2$ and find a root of $g_1(x)$ in $\F_q(\alpha_2)$ and a root of $g_2(x)$ in $\F_q(\alpha_1)$.
\State If $a\neq 1$,
\LineComment Find the minimal polynomials $h_1(x),h_2(x)$ over $\F_q$ of $\alpha_1^\ell,\alpha_2^\ell$  respectively. 
\LineComment Recursively find a root $\alpha$ of $h_1(x)$ in $\F_q(\alpha_2^\ell) = \F_q[x]/(h_2(x))$. \textit{($h_1(x)$ and $h_2(x)$ have degree $\ell^{a-1}$.)}
\LineComment Find a $\gamma \in \F_q(\alpha_2^\ell)$ such that $\gamma^\ell = \alpha/\alpha_2^\ell$.
\LineComment Apply lemma \ref{toeplitz_lemma} to $\alpha_1 \sim \gamma \alpha_2$ and find a root of $g_1(x)$ in $\F_q(\beta_2)$.% and a root of $g_2(x)$ in $\F_q(\alpha_1)$.
%\LineComment Apply lemma \ref{toeplitz_lemma} to $\alpha_1 \sim \gamma \alpha_2$ and find a root of $g_1(x)$ in $\F_q(\alpha_2)$ and a root of $g_2(x)$ in $\F_q(\alpha_1)$.
\end{algorithmic}
\end{algorithm}
We next argue that Algorithm \ref{kummer_algorithm} runs to completion and is correct. \\ \\
Since $\ell$ divides $q-1$, there is a primitive $\ell^{th}$ root of unity in $\F_q$, as required in Step $1$.\\ \\
In Step $2$, $\alpha_1^\ell$ is claimed to have degree $\ell^{a-1}$. Let $b$ be the degree of $\alpha_1^\ell$. Since $$\sigma^{\ell^{a-1}}(\alpha_1^\ell)= \left(\sigma^{\ell^{a-1}}(\alpha_1)\right)^\ell = \zeta_\ell^\ell \alpha_1^\ell = \alpha_1^\ell,$$ 
$b$ divides $\ell^{a-1}$. Since $\sigma^{b}(\alpha_1^\ell) = \alpha_1^\ell$, $\sigma^{b}(\alpha_1)/\alpha_1$ is an $\ell^{th}$ root of unity. Thus $\sigma^{\ell b}(\alpha_1)=\alpha_1$ implying the degree of $\alpha_1$ divides $\ell b$. Since $\zeta_\ell \neq 1$, $\alpha_1$ has degree $\ell^a$. Thus $\ell^{a}$ divides $b\ell$ and we may conclude that $\alpha_1^\ell$ has degree $\ell^{a-1}$. Likewise, in Step $3$, $\alpha_2$ has degree $\ell^{a-1}$.\\ \\
In Step $4$, since $a=1$, $\alpha_1^\ell,\alpha_2^\ell \in \F_q$. Further $\alpha_1/\alpha_2 \in \F_q$ since $\sigma(\alpha_1/\alpha_2) = (\zeta \alpha_1)/(\zeta\alpha_2) = \alpha_1/\alpha_2$. Thus
$\alpha_1^\ell/\alpha_2^\ell$ is an $\ell^{th}$ power in $\F_q$ ensuring that an $e \in \F_q$ such that $e^\ell = \alpha_1^\ell/\alpha_2^\ell$ exists. Hence $(\alpha_1/a\alpha_2)$ is an $\ell^{th}$ root of unity and there exists an integer $i$ such that $\alpha_1 = \sigma^{i(\ell^{a-1})}(e \alpha_2)$. Further $\alpha_1$ has degree $\ell^a$. Hence Lemma \ref{toeplitz_lemma}, when applied to the relation $\alpha_1 \sim e \alpha_2$, correctly finds the desired output.\\ \\
The recursive call in Step $5$ yields a root $\alpha\in \F_q(\alpha_2^\ell)$ of $h_1(x)$. Hence $\alpha = \sigma^{j}(\alpha_1^\ell) = (\sigma^j(\alpha_1))^\ell $ for some integer $j$. Further, $\sigma^j(\alpha_1)/\alpha_2 \in \F_q(\alpha_2^\ell)$ since
 $$ \sigma^{\ell^{a-1}}(\sigma^j(\alpha_1)/\alpha_2)=\sigma^j(\zeta_\ell\alpha_1)/(\zeta_\ell\alpha_2) = \sigma^j(\alpha_1)/\alpha_2.$$
Hence $\alpha/\alpha_2^\ell = (\sigma^j(\alpha_1)/\alpha_2)^\ell$ is an $\ell^{th}$ power in $\F_q(\alpha_2^\ell)$ assuring the existence of a $\gamma$ that is sought in Step $5$. For such a $\gamma$, $\gamma^\ell = (\sigma^j(\alpha_1)/\alpha_2)^\ell$ implying $\sigma^j(\alpha_1)/(\gamma \alpha_2)$ is an $\ell^{th}$ root of unity. Hence, there exists an integer $i$ such that $$\sigma^j(\alpha_1) = \gamma \sigma^{i\ell^{a-1}}(\alpha_2) = \sigma^{i\ell^{a-1}}(\gamma\alpha_2).$$ Further $\alpha_1$ has degree $\ell^a$. Hence Lemma \ref{toeplitz_lemma}, when applied to the relation $\alpha_1 \sim \gamma c \alpha_2$, correctly finds the desired output.
\subsection{Implementation and Running Time Analysis}
To implement Step $1$, pick a random $c\in \F_q$ and if $c^{\frac{(q-1)}{\ell}} \neq 1$, set $\zeta = c^{\frac{q-1}{m}}$. Else try again with a new independent choice $c \in \F_q$. We succeed in finding a $\zeta$ if the $c$ chosen is not a $\ell^{th}$ power. This happens with probability at least $1-1/\ell$. The expected running time of Step $1$ is hence $O(\log^2 q)$. Running times of Steps $2$ and $3$ are dominated by the $\ell^{a+o(1)}\log^{2+o(1)} q$ time their respective calls to Lemma \ref{hilbert_lemma} take.\\ \\
In Step $4$, find a root $a$ of $x^\ell-(\alpha_1^\ell/\alpha_2^\ell) \in \F_q[x]$ in $\F_q$ using \cite{gs,ku} in $\ell^{1+o(1)}\log^{2+o(1)} q$ time. The invocations to Lemma \ref{toeplitz_lemma} in Steps $4$ and $5$ each take $\ell^{a+o(1)}\log^{2+o(1)} q$ time.\\ \\
In Step $5$, minimal polynomials of $\alpha_1$ and $\alpha_2$ can be computed in $\ell^{a+o(1)}\log^{1+o(1)} q)$ time \cite[\S~8.4]{ku}. To compute $\gamma$, find a root of $x^\ell-\alpha/\alpha_2^\ell$ in $\F_q(\alpha_2^\ell) = \F_q[x]/(h_2(x))$ using \cite{gs,ku}. Since we a finding the root of a degree $\ell$ polynomial over a field of size $q^{\ell^{a-1}}$, the running time $\ell^{a+o(1)}\log^{1+o(1)} q)$ turns out to be nearly linear in $\ell^{a}$.\\ \\
Algorithm \ref{kummer_algorithm} makes at most one recursive call to an identical subproblem of size $\ell^{a-1}$. Hence at most $a$ recursive calls are made in total. In summary, we have the following theorem. 
\begin{theorem}\label{kummer_theorem}
Algorithm \ref{kummer_algorithm} solves the \textsc{Isomorphism Problem} restricted to the special case when $n$ is a power of a prime $\ell$ dividing $q-1$ in $n^{1+o(1)} \log^{2+o(1)} q$ time.
\end{theorem}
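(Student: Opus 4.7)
The proof essentially packages the correctness discussion and the running time analysis that already appear just after the algorithm statement, so my plan is to present it as a clean synthesis of those two pieces and verify the recursion balances as claimed.

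For correctness, I would proceed step by step through Algorithm \ref{kummer_algorithm}. Step 1 is justified because $\ell \mid q-1$ forces $\F_q^\times$ to contain a primitive $\ell$th root of unity, found by sampling. Steps 2 and 3 invoke Lemma \ref{hilbert_lemma} with $L = \F_q(\beta_i)$, $m = \ell^{a-1}$, $\zeta = \zeta_\ell$; the hypotheses $\zeta \in K$ (since $\zeta_\ell \in \F_q$) and $\zeta^{[L:K]} = \zeta_\ell^\ell = 1$ are met. The key algebraic claim is that $\alpha_i^\ell$ has degree exactly $\ell^{a-1}$, which follows from two observations: applying $\sigma^{\ell^{a-1}}$ to $\alpha_i^\ell$ gives $\zeta_\ell^\ell \alpha_i^\ell = \alpha_i^\ell$, so the degree $b$ divides $\ell^{a-1}$; and if $\sigma^b(\alpha_i^\ell) = \alpha_i^\ell$ then $\sigma^b(\alpha_i)/\alpha_i$ is an $\ell$th root of unity, whence $\alpha_i$ has degree dividing $\ell b$; since $\alpha_i$ has degree $\ell^a$ (because $\zeta_\ell \neq 1$), we get $b = \ell^{a-1}$.

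For the branch $a = 1$ in Step 4, both $\alpha_1^\ell$ and $\alpha_2^\ell$ lie in $\F_q$; moreover $\alpha_1/\alpha_2$ is $\sigma$-fixed (both numerator and denominator scale by $\zeta_\ell$ under $\sigma$), so $(\alpha_1/\alpha_2)^\ell = \alpha_1^\ell/\alpha_2^\ell$ is genuinely an $\ell$th power in $\F_q$, making the extraction of $e$ well-posed. Then $\alpha_1$ and $e\alpha_2$ differ by an $\ell$th root of unity, which is a power of $\zeta_\ell = \sigma^{\ell^{a-1}}(\alpha_1)/\alpha_1$, so $\alpha_1 \sim e\alpha_2$, and Lemma \ref{toeplitz_lemma} recovers the desired root. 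The recursive branch in Step 5 is the same argument in $\F_q(\alpha_2^\ell)$ instead of $\F_q$: the recursion yields $\alpha = \sigma^j(\alpha_1^\ell)$ for some $j$; the ratio $\sigma^j(\alpha_1)/\alpha_2$ is $\sigma^{\ell^{a-1}}$-fixed, so its $\ell$th power $\alpha/\alpha_2^\ell$ lies in $\F_q(\alpha_2^\ell)$, justifying the extraction of $\gamma$, after which $\alpha_1 \sim \gamma\alpha_2$ and Lemma \ref{toeplitz_lemma} concludes.

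For the running time, I would sum the stated costs per step and then handle the recursion. Step 1 costs $\widetilde{O}(\log^2 q)$ (one successful trial in expectation since at most half of $\F_q^\times$ is an $\ell$th power). Steps 2 and 3 each reduce to one call to Lemma \ref{hilbert_lemma} at cost $\widetilde{O}(\ell^a \log^2 q)$. In Step 4, root extraction in $\F_q$ is $\widetilde{O}(\ell \log^2 q)$ via \cite[Thm. 5.4]{gs} with \cite[Cor. 7.2]{ku}, and Lemma \ref{toeplitz_lemma} again costs $\widetilde{O}(\ell^a \log^2 q)$. In Step 5, minimal polynomial computations via \cite[\S~8.4]{ku} and the degree-$\ell$ root extraction over $\F_{q^{\ell^{a-1}}}$ via \cite[Thm. 3]{ks} with fast modular composition all fit in $\widetilde{O}(\ell^a \log^2 q)$. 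Writing $T(a)$ for the total cost at input degree $\ell^a$, we thus have $T(a) \leq T(a-1) + \widetilde{O}(\ell^a \log^2 q)$ with $T(1) = \widetilde{O}(\ell \log^2 q)$; unrolling gives $T(a) = \widetilde{O}(\ell^a \log^2 q) = \widetilde{O}(n \log^2 q)$ since the geometric sum is dominated by its top term.

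The only potentially delicate point, and what I would flag as the main subtlety rather than obstacle, is verifying that the degree bookkeeping in Step 5 is consistent — specifically that the recursive invocation genuinely operates on an instance of the prime-power Isomorphism Problem (irreducibility of $h_1, h_2$ of degree $\ell^{a-1}$), which is exactly the content of the degree claim for $\alpha_i^\ell$ proved above; everything else is a matter of plugging in the established lemmas.
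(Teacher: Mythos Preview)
Your proposal is correct and follows essentially the same approach as the paper: the correctness argument (degree of $\alpha_i^\ell$, existence of $e$ and $\gamma$, and the resulting $\sim$ relations fed to Lemma~\ref{toeplitz_lemma}) and the per-step cost accounting match the paper's analysis almost line for line. The only cosmetic difference is that you phrase the total cost via an explicit recurrence $T(a) \le T(a-1) + \widetilde{O}(\ell^a \log^2 q)$ summing to a geometric series, whereas the paper simply observes that at most $a$ recursive calls are made; both yield $\widetilde{O}(n \log^2 q)$.
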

%\begin{theorem}\label{kummer_theorem}
%Algorithm \ref{kummer_algorithm}, given monic irreducibles $g_1(x),g_2(x) \in \F_q[X]$ of degree $\ell^a$ where $\ell$ is a prime dividing $q-1$, finds a root of $g_1(x)$ in $\F_q[x]/(g_2(x))$ and a root of $g_2(x)$ in $\F_q[x]/(g_1(x))$ in $(\ell^a)^{1+o(1)} \log q + \ell^a (\log q)^{2+o(1)}$ expected time.
%\end{theorem}
%\begin{remark} Algorithm \ref{kummer_algorithm} To Do. can be implemented in $m^{2+o(1)}(\log q)^{2+o(1)}$ time without the fast modular composition algorithm \cite{ku}. The only places where the fast modular composition was used are Lemma\ref{toeplitz_lemma} , Lemma \ref{hilbert_lemma} and the root finding in Steps $4$ and $5$. Without fast modular composition, the root finding in Steps $4$ and $5$ can be computed in $(\ell^{a})^{1+o(1)} (\log q)^{1+o(1)} + \ell^a  (\log q)^{2+o(1)}$ time \cite{gs}. By remark ?, the algorithm in Lemma\ref{toeplitz_lemma} can be implemented in $(\ell^{a})^{2+o(1)}\log q$ time without fast modular composition. In the proof of Lemma \ref{hilbert_lemma}, we may compute $\sum_{i=0}^{m-1}\zeta^{-i}\sigma^i(\theta)$ as follows. Iteratively compute the $\{\theta,\theta^q,\ldots,\theta^{q^{m-1}}\}$. That is, given $\theta^{q^i}$, $\theta^{q^{i+1}}$ is computed as $(\theta^{q^i})^q$ using repeated squaring HOW ?. From this, compute $\sum_{i=0}^{m-1}\zeta^{-i}\sigma^i(\theta)$ with an overall running time $({\ell^a})^{2+o(1)} (\log q)^{2+o(1)}$.\\ \\
%\end{remark}
\section{Root Finding in Artin-Schreier Extensions of Finite Fields}\label{artin_schreier_section}
Using Artin-Schrier theory, we solve the \textsc{Isomorphism Problem} restricted to the special case when $n$ is a power of the characteristic $p$. The novelty here is a fast recursive evaluation of the idempotent in the proof of the additive version of (cyclic) Hilbert's theorem 90. 
\begin{lemma}\label{hilbert_additive_lemma} There is an algorithm that given a finite extension $L/\F_q$ of degree $[L:\F_q]$ divisible by $p$, finds an $\alpha \in L$ such that $\sigma^{[L:\F_q]/p}(\alpha)=\alpha+1$ in $[L:\F_q]^{1+o(1)} \log^{2+o(1)} q$  time. 
\end{lemma}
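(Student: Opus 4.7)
The plan is to mirror the proof of Lemma \ref{hilbert_lemma}, replacing the multiplicative twist $\tau = \zeta^{-1}\sigma^m$ there by the additive operator $\sigma^m - 1$. Put $N := [L:\F_q]$ and $m := N/p$, and let $K \subset L$ be the fixed field of $\sigma^m$, so $[L:K] = p$. The additive form of (cyclic) Hilbert 90 applied to $L/K$ guarantees the existence of the desired $\alpha$, since $\mathrm{Tr}_{L/K}(1) = p \cdot 1 = 0$ in characteristic $p$. The classical explicit construction, the additive analogue of the idempotent used in Lemma \ref{hilbert_lemma}, yields
$$\alpha = -\frac{1}{c}\sum_{i=1}^{p-1} i \cdot \sigma^{mi}(\theta)$$
for any $\theta \in L$ whose trace $c := \mathrm{Tr}_{L/K}(\theta) = \sum_{i=0}^{p-1}\sigma^{mi}(\theta)$ is nonzero. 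Using $\sigma^{mp}(\theta) = \theta$ and collapsing a telescoping sum confirms $\sigma^m(\alpha) - \alpha = c^{-1}\,\mathrm{Tr}_{L/K}(\theta) = 1$; a uniformly random $\theta$ yields $c \neq 0$ with probability $1 - 1/p \geq 1/2$, so a few independent retries suffice.

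The crux, exactly as in Lemma \ref{hilbert_lemma}, is to evaluate the two sums
$$T_p := \sum_{i=0}^{p-1}\sigma^{mi}(\theta),\qquad U_p := \sum_{i=0}^{p-1} i \cdot \sigma^{mi}(\theta)$$
without spending $\Omega(p)$ Frobenius applications to compute the individual conjugates $\sigma^{mi}(\theta)$. Introducing the partial sums $T_b$ and $U_b$ obtained by truncating at $i < b$, I would base the computation on the doubling/splitting recurrences
$$T_{a+b} = T_a + \sigma^{ma}(T_b),\qquad U_{a+b} = U_a + \sigma^{ma}(U_b) + a \cdot \sigma^{ma}(T_b),$$
which parallel equation \ref{iterated_hilbert}. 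After a one-shot precomputation of $\eta^q$ (with $L = \F_q(\eta)$) in $\widetilde{O}(N \log^2 q)$ time, each evaluation of a Frobenius power $\sigma^{ma}$ on an element of $L$ costs $\widetilde{O}(N \log q)$ via the iterated Frobenius algorithm of \cite[Alg. 3.1]{gs} combined with fast modular composition \cite[Cor. 7.2]{ku}.

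Unrolling $p$ along its binary expansion, with the same odd-length fix-up used in Lemma \ref{hilbert_lemma} when $p$ is not a power of two, takes $O(\log p) = O(\log q)$ stages and hence $\widetilde{O}(N \log q \log p) \subseteq \widetilde{O}(N \log^2 q)$ additional time. Once $T_p$ and $U_p$ are in hand, $\alpha = -T_p^{-1} U_p$ is returned directly. The only delicate point is getting the coupled recurrence for the weighted sum $U_b$ right; after that, the running-time argument is a carbon copy of the one for Lemma \ref{hilbert_lemma}.
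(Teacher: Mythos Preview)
Your proposal is correct and follows essentially the same approach as the paper: the same explicit additive-Hilbert-90 element $\alpha = -\mathrm{Tr}_{L/K}(\theta)^{-1}\sum_i i\,\sigma^{mi}(\theta)$, the same pair of partial sums (your $T_b,U_b$ are the paper's $\Gamma_b,\Sigma_b$), and the same doubling/splitting recurrence evaluated via iterated Frobenius and fast modular composition. One small slip: the success probability for $c\neq 0$ is $1-1/|K| = 1-q^{-m}$, not $1-1/p$; your stated bound is an underestimate, so the conclusion $\geq 1/2$ still holds.
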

\begin{proof}
Let $m:=[L:K]/p$ and $K:=\{\beta \in L|\sigma^m(\beta)=\beta\}$.
Since the trace of $1$ from $L$ down to $K$ is $0$, an $\alpha$ as claimed in the lemma exists by Hilbert's theorem $90$ applied to the cyclic extension $L/K$. We next describe an algorithm that finds such an $\alpha$ in the stated time. \\ \\
Let $Tr_{L/K} = \sum_{i=0}^{p-1}\sigma^{mi}$ denote the trace from $L$ to $K$. Pick $\theta \in L$ uniformly at random. If $Tr_{L/K}(\theta) \neq 0$  (which happens with probability at least $1/2$), setting $$\alpha := \frac{-1}{Tr_{L/K}(\theta)}\sum_{i=0}^{p-1} i\sigma^{mi}(\theta)$$ ensures $\sigma^{m}(\alpha)-\alpha = 1$.
We next demonstrate that given $\theta \in L$, $\alpha$ can be computed fast. \\ \\
Let $L$ be given as $\F_q(\eta)$ for some $\eta \in \overline{\F}_q$ with minimal polynomial $g(x) \in \F_q[x]$. By repeated squaring, in time $O([L:\F_q]\log^2 q)$ compute $\eta^q$.\\ \\
For a positive integer $b$, let $\Sigma_b$ denote the partial sum $ \sum_{i=0}^{b-1}i\sigma^{mi}(\theta)$ and let $\Gamma_b$ denote the partial trace $\sum_{i=0}^{b-1}\sigma^{mi}(\theta)$. We intend to compute $\Sigma_{p}$ and $\Gamma_p$ to set $\alpha = \Sigma_p/\Gamma_p$.\\ \\
For every positive integer $b$, 
$$\sum_{i=0}^{2b-1}i\sigma^{mi}(\theta) = \sum_{i=0}^{b-1}i\sigma^{mi}(\theta) + \sum_{i=b}^{2b-1}i\sigma^{mi}(\theta) = \sum_{i=0}^{b-1}i\sigma^{mi}(\theta) +  \sum_{i=0}^{b-1}(b+i)\sigma^{m(b+i)}(\theta)$$
$$= \sum_{i=0}^{b-1}i\sigma^{mi}(\theta) + b\sigma^{mb}\left( \sum_{i=0}^{b-1}\sigma^{mi}(\theta)\right) + \sigma^{mb}\left( \sum_{i=0}^{b-1}i\sigma^{mi}(\theta)\right).$$
\begin{equation}\label{iterated_hilbert_additive}
\Rightarrow \Sigma_{2b} = \Sigma_b + b \sigma^{bm}(\Sigma_b) + \sigma^{bm}(\Gamma_b).
\end{equation}
Likewise
\begin{equation}\label{iterated_trace_additive}
\Gamma_{2b} = \Gamma_b + \sigma^{bm}\Gamma_b.
\end{equation}
Given $\Sigma_b, \Gamma_b$ and $\eta^q$, $\sigma^{mb}(\Sigma_b)$ and $\sigma^{mb}(\Gamma_b)$ can be computed in $[L:\F_q]^{1+o(1)} \log^{2+o(1)} q)$ time using the Frobenius representation of \cite{gs} and fast modular composition \cite{ku}. Hence, given $\Sigma_b$ and $\Gamma_b$, computing $\Sigma_{2b}$ and $\Gamma_{2b}$ using equations \ref{iterated_hilbert_additive} and \ref{iterated_trace_additive} takes $[L:\F_q]^{1+o(1)}\log^{2+o(1)} q$ time. This running time is independent of $b$ and $m$.\\ \\
Set $c = \lfloor \log_2p \rfloor$ and successively compute $\Sigma_0,\Gamma_0, \Sigma_2,\Gamma_2,\Sigma_4,\Gamma_4, \ldots, \Sigma_{2^c},\Gamma_{2^c}$ using equations \ref{iterated_hilbert_additive} and \ref{iterated_trace_additive}. Since $c \leq \log_2p$, this takes $\widetilde{O}([L:\F_q] \log^2 q)$ time. If $p$ is not a power of $2$, we recursively compute $\Sigma_{p-2^c}$ and $\Gamma_{p-2^c}$. With the knowledge of $\Sigma_{2^c},\Gamma_{2^c},\Sigma_{p-2^c},\Gamma_{p-2^c}$, we may compute $\Sigma_p$ and $\Gamma_p$ in $\widetilde{O}([L:\F_q]\log q)$ time as 
\begin{equation}
\Sigma_{p} = \Sigma_{2^c} + 2^c\sigma^{m2^c}(\Sigma_{p-2^c})+\sigma^{m2^c}(\Gamma_{p-2^c}), \Gamma_p = \Gamma_2^c+\sigma^{m2^c}(\Gamma_{p-2^c}).
\end{equation}
Since $p-2^c \leq p/2$, at most $\log_2p$ recursive calls are made in total.
\end{proof}
We next state the algorithm followed by proof of correctness and implementation details.
\begin{algorithm}[H]
\caption{Root Finding Through Artin-Schreier Theory:}\label{artin_schreier_algorithm}
\begin{algorithmic}[1]
\INPUT Monic irreducibles $g_1(x),g_2(x) \in \F_q[X]$ of degree $p^a$ where $a$ is a positive integer.
%\OUTPUT A root of $g_1(x)$ in $\F_q[x]/(g_2(x))$ and a root of $g_2(x)$ in $\F_q[x]/(g_1(x))$.
\OUTPUT A root of $g_1(x)$ in $\F_q(\beta_2)$ where $\beta_2 \in \overline{\F}_q$ is a root of $g_2(x)$. 
\State Construct $\F_q(\beta_1) \cong \F_{q^{p^a}}$ where $\beta_1$ is a root of $g_1(x)$. 
\LineComment Apply Lemma \ref{hilbert_additive_lemma} with $L=\F_q(\beta_1)$ and find $\alpha_1 \in \F_q(\beta_1)$ such that $$\sigma^{p^{a-1}}(\alpha_1) = \alpha_1 + 1.$$
\LineComment Compute $\alpha_1^p-\alpha_1$.  \textit{($\alpha_1^p-\alpha_1$ will have degree $p^{a-1}$.)}
\State Construct $\F_q(\beta_2) \cong \F_{q^{p^a}}$ where $\beta_2$ is a root of $g_2(x)$. 
\LineComment Apply Lemma \ref{hilbert_additive_lemma} with $L=\F_q(\beta_2)$ and find $\alpha_2 \in \F_q(\beta_2)$ such that $$\sigma^{p^{a-1}}(\alpha_2) = \alpha_2 + 1.$$
\LineComment Compute $\alpha_2^p-\alpha_2$.  \textit{($\alpha_2^p-\alpha_2$ will have degree $p^{a-1}$.)}
\State If $a=1$,
\LineComment Find an $e \in \F_q$ such that $e^p-e=(\alpha_1^p-\alpha_1)- (\alpha_2^p-\alpha_2)$. 
%\LineComment Apply Lemma\ref{toeplitz_lemma} to $\alpha_1 = \alpha_2 + e$ and find a root of $g_1(x)$ in $\F_q(\alpha_2)$ and a root of $g_2(x)$ in $\F_q(\alpha_1)$.
\LineComment Apply Lemma \ref{toeplitz_lemma} to $\alpha_1 \sim  \alpha_2+ e$ and find a root of $g_1(x)$ in $\F_q(\beta_2)$.
\State If $a\neq 1$,
\LineComment Find the minimal polynomials $h_1(x),h_2(x)$ over $\F_q$ of $\alpha_1^p-\alpha_1,\alpha_2^p-\alpha_2$  respectively. 
\LineComment Recursively find a root $\alpha$ of $h_1(x)$ in $\F_q(\alpha_2^p-\alpha_2) = \F_q[x]/(h_2(x))$. \textit{($h_1(x)$ and $h_2(x)$ have degree $\ell^{a-1}$.)}
\LineComment Find a $\gamma \in \F_q(\alpha_2^\ell)$ such that $\gamma^p-\gamma = \alpha- (\alpha_2^p-\alpha_2)$.
\LineComment Apply Lemma \ref{toeplitz_lemma} to $\alpha_1 \sim  \alpha_2  + \gamma$ and find a root of $g_1(x)$ in $\F_q(\beta_2)$.
\end{algorithmic}
\end{algorithm}
We next argue that Algorithm \ref{artin_schreier_algorithm} runs to completion and is correct. \\ \\
In Step $1$, $\alpha_1^p-\alpha_1$ is claimed to have degree $p^{a-1}$. Let $b$ be the degree of $\alpha_1^p-\alpha_1$. Since $$\sigma^{p^{a-1}}(\alpha_1^p-\alpha_1)= \left(\sigma^{p^{a-1}}(\alpha_1)\right)^p-\sigma^{p^{a-1}}(\alpha_1) = \alpha_1^p+1-(\alpha_1+1) = \alpha_1^p-\alpha_1,$$ 
$b$ divides $p^{a-1}$. Since $\alpha_1$ is a root of $x^p-x-(\alpha_1^p-\alpha_1)$ and $\alpha_1$ has degree $p^{a}$, $\alpha_1^p-\alpha_1$ has degree at most $p^{a-1}$. Thus $\alpha_1^p-\alpha_1$ has degree $p^{a-1}$. Likewise, in Step $2$, $\alpha_2^p-\alpha_2$ has degree $p^{a-1}$.\\ \\
In Step $3$, since $a=1$, $\alpha_1^p-\alpha_1,\alpha_2^p-\alpha_2 \in \F_q$. Further $\alpha_1-\alpha_2$ is in $\F_q$ since $\sigma(\alpha_1-\alpha_2) = (\alpha_1+1)-(\alpha_2+1) = \alpha_1-\alpha_2$. Thus
$\alpha_1-\alpha_2 \in \F_q$ is a root of $x^p-x-((\alpha_1^p-\alpha_1) - (\alpha_2^p-\alpha_2))$ ensuring that an $e \in \F_q$ such that $e^p-e = (\alpha_1^p-\alpha_1) - (\alpha_2^p-\alpha_2)$ exists. The roots of $x^p-x-((\alpha_1^p-\alpha_1) - (\alpha_2^p-\alpha_2))$ are $\{e,e+1,e+2,\ldots,e+(p-1)\}$. Hence $\alpha_1-\alpha_2=e$. Further $\alpha_1$ has degree $\ell^a$. Hence Lemma \ref{toeplitz_lemma}, when applied to the relation $\alpha_1 \sim  \alpha_2 + e$, correctly finds the desired output.\\ \\
The recursive call in Step $4$ yields a root $\alpha \in \F_q(\alpha_2^p-\alpha_2)$ of $h_1(x)$. Hence $\alpha = \sigma^{j}(\alpha_1^p-\alpha_1) = (\sigma^j(\alpha_1))^p-\sigma^j(\alpha_1) $ for some integer $j$. Further, $\sigma^j(\alpha_1)-\alpha_2 \in \F_q(\alpha_2^p-\alpha_2)$ since
 $$ \sigma^{p^{a-1}}(\sigma^j(\alpha_1)-\alpha_2)=\sigma^j(\alpha_1+1)-(\alpha_2+1) = \sigma^j(\alpha_1)-\alpha_2.$$
Hence $\sigma^j(\alpha_1)-\alpha_2 \in \F_q(\alpha_2^p-\alpha_2)$ is a root of $x^p-x = \alpha-(\alpha_2^p-\alpha_2)$ assuring the existence of $\gamma$ sought in Step $5$. For such a $\gamma$, the roots of $x^p-x-(\alpha - (\alpha_2^p-\alpha_2))$ are $\{\gamma,\gamma+1,\gamma+2,\ldots,\gamma+(p-1)\}$. Hence $\sigma^j(\alpha_1)-\alpha_2=\gamma$. Further $\alpha_1$ has degree $\ell^a$. Hence Lemma \ref{toeplitz_lemma}, when applied to the relation $\alpha_1 \sim  \alpha_2 + \gamma$, correctly finds the desired output.
\subsection{Implementation and Run Time Analysis}
Running times of Steps $1$ and $2$ are dominated by their respective calls to Lemma \ref{hilbert_lemma}, each taking $p^{a+o(1)} \log^{2+o(1)} q$ time.\\ \\ 
In Step $3$, find a root $e$ of $x^p-x-((\alpha_1^p-\alpha_1)-(\alpha_2^p-\alpha_2)) \in \F_q[x]$ in $\F_q$ using \cite{gs,ku} in $p^{1+o(1)}\log^{2+o(1)} q$ time. Invocations to Lemma \ref{toeplitz_lemma} in Steps $3$ and $4$ take $p^{a+o(1)} \log^{2+o(1)} q$ time.\\ \\
In Step $4$, minimal polynomials of $\alpha_1^p-\alpha_1$ and $\alpha_2^p-\alpha_2$ can be computed in $p^{a+o(1)}\log^{1+o(1)} q$ time \cite[\S~8.4]{ku}. To compute $\gamma$, find a root of $x^p-x-(\alpha-(\alpha_2^p-\alpha_2))$ in $\F_q(\alpha_2^p-\alpha_2) = \F_q[x]/(h_2(x))$ using \cite{gs,ku}. Since we a finding the root of a degree $p$ polynomial over a field of size $q^{p^{a-1}}$, the running time $p^{a+o(1)} \log^{1+o(1)} q$ turns out to be nearly linear in $p^{a}$.\\ \\
Algorithm \ref{artin_schreier_algorithm} makes at most one recursive call to an identical subproblem of size $p^{a-1}$. Hence at most $a$ recursive calls are made in total. In summary, we have the following theorem.
\begin{theorem}\label{artin_schreier_theorem}
Algorithm \ref{artin_schreier_algorithm} solves the \textsc{Isomorphism Problem} restricted to the special case when $n=p^a$ in $n^{1+o(1)}\log^{1+o(1)} q$ time.
\end{theorem}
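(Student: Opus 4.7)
The proof will combine the correctness sketch already laid out immediately after the algorithm with the subroutine timings from the Implementation and Run Time Analysis subsection. My plan is to give an induction on $a$ for correctness, and a telescoping argument for the total cost, taking as black boxes Lemmas \ref{toeplitz_lemma} and \ref{hilbert_additive_lemma} together with the polynomial root-finding and minimal polynomial results of Kaltofen--Shoup, von zur Gathen--Shoup, and Kedlaya--Umans.

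For correctness, the base case $a=1$ exploits the fact that $\sigma$ fixes both $\alpha_1-\alpha_2$ and $\alpha_1^p-\alpha_1$, placing them in $\F_q$. Then $\alpha_1-\alpha_2$ is an $\F_q$-root of the Artin--Schreier polynomial $x^p-x-\bigl((\alpha_1^p-\alpha_1)-(\alpha_2^p-\alpha_2)\bigr)$, and any $\F_q$-root $e$ of this polynomial differs from $\alpha_1-\alpha_2$ by an element of $\F_p$, so $\alpha_1\sim \alpha_2+e$ and Lemma \ref{toeplitz_lemma} closes the step. For the inductive step, the recursive call returns $\alpha=\sigma^j(\alpha_1^p-\alpha_1)\in \F_q(\alpha_2^p-\alpha_2)$ for some integer $j$, and a parallel $\sigma^{p^{a-1}}$-invariance argument places $\sigma^j(\alpha_1)-\alpha_2$ in $\F_q(\alpha_2^p-\alpha_2)$; the Artin--Schreier equation $x^p-x=\alpha-(\alpha_2^p-\alpha_2)$ solved inside $\F_q(\alpha_2^p-\alpha_2)$ therefore yields $\gamma$ with $\alpha_1\sim\alpha_2+\gamma$, and Lemma \ref{toeplitz_lemma} again delivers the required root.

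For the running time, each level of the recursion incurs $\widetilde{O}(p^a\log^2 q)$: the two invocations of Lemma \ref{hilbert_additive_lemma}, the two minimal polynomial computations, the single Artin--Schreier root extraction over $\F_{q^{p^{a-1}}}$ via \cite[Thm.~3]{ks} with fast modular composition, and the two applications of Lemma \ref{toeplitz_lemma} all fit inside this bound. Since the algorithm makes at most one recursive call on an input of degree $p^{a-1}$, the recurrence $T(p^a)\le T(p^{a-1})+\widetilde{O}(p^a\log^2 q)$ telescopes geometrically to $\widetilde{O}(p^a\log^2 q)=\widetilde{O}(n\log^2 q)$.

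The one structural point needing care, and the hinge of the whole argument, is verifying that $\alpha_i^p-\alpha_i$ actually has degree exactly $p^{a-1}$, so that the recursion shrinks by the right factor. The upper bound is immediate from $\sigma^{p^{a-1}}$ fixing $\alpha_i^p-\alpha_i$, and the lower bound follows because $\alpha_i$ is a root of the degree-$p$ polynomial $x^p-x-(\alpha_i^p-\alpha_i)$ over $\F_q(\alpha_i^p-\alpha_i)$, forcing $[\F_q(\alpha_i):\F_q(\alpha_i^p-\alpha_i)]\le p$ and hence the degree of $\alpha_i^p-\alpha_i$ to be at least $p^{a-1}$. This degree-drop lemma is what ensures $a$ levels of recursion suffice and that the recurrence really dominates at the top level.
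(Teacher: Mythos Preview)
Your proposal is correct and follows essentially the same approach as the paper: the correctness argument by induction on $a$ via the Artin--Schreier descent and Lemma~\ref{toeplitz_lemma}, and the run-time bound from the per-level $\widetilde{O}(p^a\log^2 q)$ cost plus at most $a$ recursive calls. Your degree-drop argument for $\alpha_i^p-\alpha_i$ is in fact stated more carefully than the paper's (which writes ``at most $p^{a-1}$'' where ``at least $p^{a-1}$'' is meant), and your observation that $e$ differs from $\alpha_1-\alpha_2$ by an element of $\F_p$ is the right way to justify $\alpha_1\sim\alpha_2+e$.
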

\section{Root Finding over Extensions of Finite Fields using Elliptic Curves}\label{elliptic_section}
We solve the \textsc{Isomorphism Problem} restricted to the case when $n$ is a power $\ell^a$ of a prime $\ell\nmid q(q-1)$ in $\ell^{a+1+o(1)} \log^{1+o(1)} q + O(\ell \log^5 q)$ time. 
Through this section, fix a prime $\ell$ such that $\ell \nmid q(q-1)$, $\sqrt{q} \geq 5\ell^3$ and a positive integer $a$. 
% Given two degree $\ell^a$ monic irreducible polynomials $g_1(x),g_2(x) \in \F_q[x]$, we will demonstrate how to find a root of $g_2(x)$ in $\F_q[x]/(g_1(x))$ in expected time nearly quadratic in $\ell^a$.
\subsection{Elliptic Curves with $\F_q$-rational $\ell$-torsion}
Let $E$ be an elliptic curve over $\F_q$ such that $\ell$ divides $|E(\F_q)|$ but $\ell^2$ does not. Let $\sigma_E:E\longrightarrow E$ denote the $q^{th}$ power Frobenius endomorphism and $t \in \Z$ the trace of $\sigma_E$. The characteristic polynomial $P_E(X):=X^2-tX+q \in \Z[X]$ of $\sigma_E$ factors modulo $\ell$ as $$X^2-tX+q = (X-1)(X-q) \mod \ell.$$
To see why $1$ is a root of $P_E(X)$ modulo $\ell$, observe $P_E(1)=|E(\F_q)|$ and $\ell \mid |E(\F_q)|$. The other root is $q$, since the product of the roots is $q$. By Hensel's lemma, there exists $\lambda,\mu \in \{0,1,\ldots,\ell^{a+1}-1\}$ such that $$X^2-tX+q = (X-\lambda)(X-\mu) \mod \ell^{a+1},$$ where $\lambda=1 \mod \ell$ and $\mu = q \mod \ell$. Hence there exists $P_\lambda,P_\mu \in E[\ell^{a+1}]$, each of order $\ell^{a+1}$ such that $$E[\ell^{a+1}] = \langle P_\lambda\rangle \oplus \langle P_\mu\rangle, \sigma_E(P_\lambda) = \lambda P_\lambda\ and\ \sigma_E(P_\mu)=\mu P_\mu.$$
Since $\lambda = 1 \mod \ell$ and $\ell^{2} \nmid |E(\F_q)|$, $\lambda = 1 + \gamma \ell$ where $\gamma:=(\lambda-1)/\ell \in \Z_{\geq0}$ and $\gcd(\gamma,\ell)=1$. 
\subsection{Root Finding Through Discrete Logarithms in Elliptic Curve}\label{elliptic_discrete_log_subsection}
In this subsection, we devise an algorithm for the \textsc{Isomorphism Problem} that involves discrete logarithm computations in elliptic curves. We begin with a few preparatory lemmata.
\begin{lemma}\label{ldegree_lemma} $P_\lambda \in E(\F_{q^{\ell^a}})$ and $\mathrm{x}(P_\lambda)$ has degree $\ell^a$. 
\end{lemma}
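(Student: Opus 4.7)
\textbf{Proof plan for Lemma \ref{ldegree_lemma}.} The strategy is to control the exact multiplicative order of $\lambda$ modulo $\ell^{a+1}$, since the action of $\sigma_E$ on the cyclic group $\langle P_\lambda\rangle \cong \Z/\ell^{a+1}\Z$ is multiplication by $\lambda$.

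First, I would establish that the order of $\lambda$ in $(\Z/\ell^{a+1}\Z)^\times$ is exactly $\ell^a$. Recall that $\lambda = 1 + \gamma\ell$ with $\gcd(\gamma,\ell)=1$; also note that $\ell$ is odd (since $\ell \nmid (q-1)$ would fail for $\ell=2$ whenever $q$ is odd, and $q$ must be odd here because $\ell \nmid p$ and $\ell\ne 2$ would need $p$ odd, but the hypothesis $\sqrt q \geq 5\ell^3$ and $\ell\nmid q(q-1)$ forces $\ell \geq 3$). Then the standard lifting-the-exponent identity gives $v_\ell(\lambda^{\ell^k}-1) = v_\ell(\lambda-1) + k = k+1$ for every $k \geq 0$. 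Taking $k=a$ shows $\lambda^{\ell^a}\equiv 1 \pmod{\ell^{a+1}}$, and taking $k=a-1$ shows $\lambda^{\ell^{a-1}}\not\equiv 1 \pmod{\ell^{a+1}}$, so the order is exactly $\ell^a$.

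Next, the first claim follows immediately: $\sigma_E^{\ell^a}(P_\lambda) = \lambda^{\ell^a}P_\lambda = P_\lambda$ in $E[\ell^{a+1}]$, hence $P_\lambda \in E(\F_{q^{\ell^a}})$. Moreover, no smaller power of $\sigma_E$ fixes $P_\lambda$, so the degree of $P_\lambda$ over $\F_q$ is exactly $\ell^a$, i.e.\ $\F_q(P_\lambda) = \F_{q^{\ell^a}}$.

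For the $x$-coordinate, let $d$ be the degree of $\mathrm{x}(P_\lambda)$ over $\F_q$. Since $\F_q(\mathrm{x}(P_\lambda)) \subseteq \F_q(P_\lambda)=\F_{q^{\ell^a}}$, we have $d = \ell^b$ for some $0 \leq b \leq a$. Suppose for contradiction that $b < a$. Then $\sigma_E^{\ell^b}(P_\lambda)$ shares its $x$-coordinate with $P_\lambda$, so $\sigma_E^{\ell^b}(P_\lambda) = \pm P_\lambda$, giving $\lambda^{\ell^b} \equiv \pm 1 \pmod{\ell^{a+1}}$. The $+1$ case contradicts $\mathrm{ord}(\lambda)=\ell^a > \ell^b$. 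In the $-1$ case, squaring yields $\lambda^{2\ell^b}\equiv 1 \pmod{\ell^{a+1}}$, so $\ell^a \mid 2\ell^b$; since $\ell$ is odd this forces $a\leq b$, again a contradiction. Hence $b = a$ and $\mathrm{x}(P_\lambda)$ has degree $\ell^a$, completing the proof. The only subtle point is verifying that $\ell$ is odd and invoking lifting-the-exponent correctly; everything else is a direct manipulation of the eigenvalue relation $\sigma_E(P_\lambda) = \lambda P_\lambda$.
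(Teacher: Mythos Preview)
Your proof is correct and follows essentially the same approach as the paper's: both arguments reduce the lemma to computing the multiplicative order of $\lambda = 1+\gamma\ell$ in $(\Z/\ell^{a+1}\Z)^\times$ and showing it equals $\ell^a$. Your use of lifting-the-exponent is just a more explicit version of the paper's one-line claim that $(1+\gamma\ell)^c \equiv 1 \pmod{\ell^{a+1}}$ iff $\ell^a \mid c\gamma$. One small point: your justification that $\ell$ is odd is more convoluted than necessary---simply note that $q(q-1)$ is always even, so $\ell \nmid q(q-1)$ immediately forces $\ell \geq 3$.

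Your treatment of the $x$-coordinate is in fact more careful than the paper's. The paper asserts that if $\mathrm{x}(P_\lambda)$ lay in a proper subfield then the Frobenius period $c$ of $P_\lambda$ would be a proper divisor of $\ell^a$, but this glosses over the possibility $\sigma_E^{\ell^b}(P_\lambda) = -P_\lambda$. You handle this case explicitly by squaring and using that $\ell$ is odd; an equivalent (and slightly slicker) way to phrase it is that $[\F_q(P_\lambda):\F_q(\mathrm{x}(P_\lambda))] \leq 2$, and since $[\F_q(P_\lambda):\F_q] = \ell^a$ is odd, the index must be $1$.
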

\begin{proof}
Let $c$ be the smallest positive integer such that $\sigma_E^cP_\lambda = P_\lambda$. To claim $P_\lambda \in E(\F_{q^{\ell^a}})$, it suffices to show $c=\ell^a$. Further, $c=\ell^a$ would also imply that $\mathrm{x}(P_\lambda)$ has degree $\ell^a$, for if $\mathrm{x}(P_\lambda)$ were in a proper subfield of $\F_{q^{\ell^a}}$ then $c$ has to be a proper divisor of $\ell^a$. Since $\sigma_E(P_\lambda) = \lambda P_\lambda$ and $P_\lambda$ has order $\ell^{a+1}$, $c$ equals the order of $\lambda \mod \ell^{a+1}$ in $(\Z/\ell^{a+1}\Z)^\times$. For $\lambda^c = (1+\gamma\ell)^c= 1 \mod \ell^{a+1}$ to hold, it is necessary and sufficient that $\ell^a$ divides $c\gamma$. Hence $c=\ell^a$. 
\end{proof}
\begin{lemma}\label{eigenspace_lemma}
$E(\F_{q^{\ell^a}})[\ell^{a+1}] = \langle P_\lambda\rangle$
\end{lemma}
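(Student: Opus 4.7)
The plan is to show the two inclusions. The inclusion $\langle P_\lambda\rangle \subseteq E(\F_{q^{\ell^a}})[\ell^{a+1}]$ is immediate: $P_\lambda$ has order $\ell^{a+1}$ by construction, and Lemma \ref{ldegree_lemma} places $P_\lambda$ in $E(\F_{q^{\ell^a}})$, so the entire cyclic group it generates lies in $E(\F_{q^{\ell^a}})[\ell^{a+1}]$.

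For the reverse inclusion, I would exploit the eigenspace decomposition $E[\ell^{a+1}] = \langle P_\lambda\rangle \oplus \langle P_\mu\rangle$. Any $Q \in E(\F_{q^{\ell^a}})[\ell^{a+1}]$ can be written uniquely as $Q = u P_\lambda + v P_\mu$ with $u,v \in \Z/\ell^{a+1}\Z$. The condition $\sigma_E^{\ell^a}(Q) = Q$ translates, via $\sigma_E(P_\lambda)=\lambda P_\lambda$ and $\sigma_E(P_\mu)=\mu P_\mu$, into the congruences
\begin{equation*}
u(\lambda^{\ell^a}-1) \equiv 0 \pmod{\ell^{a+1}}, \qquad v(\mu^{\ell^a}-1) \equiv 0 \pmod{\ell^{a+1}}.
\end{equation*}
The first congruence places no constraint on $u$, since the calculation in the proof of Lemma \ref{ldegree_lemma} shows $\lambda$ has order exactly $\ell^a$ in $(\Z/\ell^{a+1}\Z)^\times$, so $\lambda^{\ell^a}\equiv 1 \pmod{\ell^{a+1}}$.

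The crux is showing the second congruence forces $v \equiv 0 \pmod{\ell^{a+1}}$, and this is where the assumption $\ell \nmid q-1$ enters. Reducing modulo $\ell$ and applying Fermat's little theorem in $\F_\ell$, we get $\mu^{\ell^a} \equiv \mu \equiv q \pmod{\ell}$, and since $\ell \nmid (q-1)$, we obtain $\mu^{\ell^a} - 1 \not\equiv 0 \pmod \ell$. Thus $\mu^{\ell^a} - 1$ is a unit in $\Z/\ell^{a+1}\Z$, which forces $v \equiv 0 \pmod{\ell^{a+1}}$. Hence $Q = u P_\lambda \in \langle P_\lambda\rangle$, completing the proof. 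No step here is a serious obstacle; the only subtlety is remembering to invoke the hypothesis $\ell \nmid q-1$ at exactly the point where we distinguish the $\mu$-eigenspace from the $\lambda$-eigenspace.
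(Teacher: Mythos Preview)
Your proof is correct and follows essentially the same approach as the paper. The paper phrases the reverse inclusion as showing $E(\F_{q^{\ell^a}}) \cap \langle P_\mu \rangle = \{\mathcal{O}\}$ by contradiction, whereas you work directly with coordinates $(u,v)$ in the eigenspace decomposition; the key computation in both is that $\mu^{\ell^a}\equiv \mu \equiv q \not\equiv 1 \pmod{\ell}$, so $\mu^{\ell^a}-1$ is a unit modulo $\ell^{a+1}$.
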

\begin{proof}
From Lemma \ref{ldegree_lemma}, $\langle P_\lambda \rangle \subseteq E(\F_{q^{\ell^a}})$. Since $E[\ell^{a+1}] = \langle P_\lambda\rangle \oplus \langle P_\mu \rangle$, to claim the lemma it suffices to prove $E(\F_{q^{\ell^a}}) \cap \langle P_\mu \rangle = \{\mathcal{O}\}$. If $E(\F_{q^{\ell^a}}) \cap \langle P_\mu \rangle \neq \{\mathcal{O}\}$, then $\exists P\in E(\F_{q^{\ell^a}}) \cap \langle P_\mu \rangle$ of order $\ell$. Since $P \in E(\F_{q^{\ell^a}})$, $\sigma_E^{\ell^a}P=P$ and since $P \in \langle P_\mu \rangle$, $\sigma_EP =\mu P$. Hence $\mu^{\ell^a}P=P$. Since $P$ has order $\ell$, $\mu^{\ell^a}-1=0 \mod \ell$. Since $\ell$ is a prime, raising to $\ell^{th}$ powers modulo $\ell$ is the identity map implying $\mu =1 \mod \ell$. Since $\gcd(\ell,q-1)= 1$,  this contradicts the fact that $\mu = q \mod \ell$. Thus $E(\F_{q^{\ell^a}}) \cap \langle P_\mu \rangle = \{\mathcal{O}\}$.
\end{proof}
%The group $\Sigma := \langle\sigma_E|\sigma_E^{\ell^a}=1\rangle$ acts on $E(\F_{q^{\ell^a}})$. For $P \in E(\F_{q^{\ell^a}})$, let $\Sigma.P : = \{P,\sigma_EP,\ldots,\sigma_E^{\ell^a-1}P\}$ denote the orbit of $P$ under $\Sigma$.
The group $\Sigma := \langle\sigma_E|\sigma_E^{\ell^a}=1\rangle$ acts on $E(\F_{q^{\ell^a}})$. For $P \in E(\F_{q^{\ell^a}})$, denote the orbit $\{P,\sigma_EP,\ldots,\sigma_E^{\ell^a-1}P\}$ of $P$ under $\Sigma$ as $\Sigma.P$.
\begin{lemma}\label{orbit_lemma} The set $\langle P_\lambda \rangle \setminus \langle \ell P_\lambda\rangle$ is the following disjoint union of orbits $$\langle P_\lambda \rangle \setminus \langle \ell P_\lambda\rangle =\bigcup_{z=1}^{\ell-1} \Sigma.zP_\lambda.$$ 
\end{lemma}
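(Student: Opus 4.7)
The plan is to prove both containment and disjointness by pushing the $\Sigma$-action through to the coefficient in $\Z/\ell^{a+1}\Z$, and then to conclude equality by a counting argument. First I would observe that $\langle P_\lambda\rangle\setminus\langle\ell P_\lambda\rangle$ is precisely the set of generators of $\langle P_\lambda\rangle$, namely $\{wP_\lambda : 1\le w<\ell^{a+1},\ \gcd(w,\ell)=1\}$, which has cardinality $\ell^{a+1}-\ell^a=\ell^a(\ell-1)$. Since $\sigma_E P_\lambda=\lambda P_\lambda$, the orbit of $zP_\lambda$ is $\{z\lambda^j P_\lambda : j\ge 0\}$. Because $\lambda\equiv 1\pmod \ell$, every element of this orbit has the form $wP_\lambda$ with $w\equiv z\pmod\ell$; for $z\in\{1,\ldots,\ell-1\}$ this $w$ is coprime to $\ell$, so each orbit $\Sigma.zP_\lambda$ is contained in $\langle P_\lambda\rangle\setminus\langle\ell P_\lambda\rangle$.

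For disjointness, suppose $\Sigma.z_1P_\lambda$ and $\Sigma.z_2P_\lambda$ share a point, so $z_1\equiv z_2\lambda^j\pmod{\ell^{a+1}}$ for some $j$; reducing modulo $\ell$ gives $z_1\equiv z_2\pmod\ell$, and since both lie in $\{1,\ldots,\ell-1\}$ we must have $z_1=z_2$. For the size of a single orbit, the stabilizer of $zP_\lambda$ in $\Sigma$ consists of those $\sigma_E^j$ with $z(\lambda^j-1)\equiv 0\pmod{\ell^{a+1}}$; as $\gcd(z,\ell)=1$, this reduces to $\lambda^j\equiv 1\pmod{\ell^{a+1}}$. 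The computation in the proof of Lemma \ref{ldegree_lemma} already established that $\lambda=1+\gamma\ell$ with $\gcd(\gamma,\ell)=1$ has order exactly $\ell^a$ in $(\Z/\ell^{a+1}\Z)^\times$, so $|\Sigma.zP_\lambda|=\ell^a$.

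Combining the last two observations, the $(\ell-1)$ orbits on the right-hand side are pairwise disjoint and contribute $(\ell-1)\ell^a$ elements in total, matching the cardinality of $\langle P_\lambda\rangle\setminus\langle\ell P_\lambda\rangle$ computed above. Hence the containment $\bigcup_{z=1}^{\ell-1}\Sigma.zP_\lambda\subseteq\langle P_\lambda\rangle\setminus\langle\ell P_\lambda\rangle$ is in fact an equality, which is the claim. There is no real obstacle here; every ingredient is a straightforward consequence of $\sigma_E P_\lambda=\lambda P_\lambda$ together with the previously determined order of $\lambda$ modulo $\ell^{a+1}$.
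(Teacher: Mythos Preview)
Your proposal is correct and follows essentially the same approach as the paper: both argue containment using $\lambda\equiv 1\pmod\ell$, establish disjointness by reducing $z_1\equiv z_2\lambda^j\pmod{\ell^{a+1}}$ modulo $\ell$, and conclude equality by matching the cardinality $\ell^a(\ell-1)$ against the sum of orbit sizes. Your write-up is in fact slightly more explicit than the paper's in justifying $|\Sigma.zP_\lambda|=\ell^a$ via the order of $\lambda$ computed in Lemma~\ref{ldegree_lemma}.
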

\begin{proof}
For every $z\in \{0,1,\ldots,\ell-1\}$, $zP_\lambda \subseteq \langle P_\lambda\rangle \setminus \langle \ell P_\lambda\rangle$ and $|\Sigma.zP_\lambda| = \ell^a$. Further,
$|\langle P_\lambda\rangle \setminus \langle \ell P_\lambda\rangle| = \ell^{a}(\ell-1)$. It is thus sufficient to prove for distinct $z_1,z_2 \in \{1,2,\ldots,\ell-1\}$ that $z_1P_\lambda \cap z_2 P_\lambda = \emptyset$. If $z_1P_\lambda = \sigma_E^jz_2P_\lambda$ for some $z_1,z_2 \in \{1,2,\ldots,\ell-1\}$ and $j\in\{0,1,\ldots,\ell^a-1\}$ then, $$z_1P_\lambda = \lambda^jz_2P_\lambda \Rightarrow z_1-\lambda^jz_2=0 \mod \ell^a \Rightarrow z_1-(1+\gamma \ell)^jz_2=0\mod \ell^a \Rightarrow z_1=z_2 \mod \ell.$$
\end{proof}
Let $Tr_E:E(\F_{q^{\ell^a}}) \longrightarrow E(\F_{q^{\ell^a}})$ denote the trace like map that sends $P$ to $\sum_{j=0}^{\ell^a-1}\sigma_E^j P$. The next lemma states that distinct $\Sigma$ orbits of $\langle P_\lambda \rangle \setminus \langle\ell P_\lambda \rangle$ have distinct images under $Tr_E$.
\begin{lemma}\label{distinct_trace_lemma} For all $P_1,P_2 \in \langle P_\lambda\rangle \setminus \langle \ell P_\lambda\rangle$, $Tr_E(P_1)=Tr_E(P_2)$ if and only if  $\Sigma.P_1 = \Sigma.P_2$.
\end{lemma}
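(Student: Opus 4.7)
The plan is to reduce the lemma to an explicit calculation of the $\ell$-adic valuation of the geometric sum $S := \sum_{j=0}^{\ell^a - 1} \lambda^j$, and then combine this with Lemma \ref{orbit_lemma}. The backward implication is free, since $Tr_E$ is constant on every $\Sigma$-orbit. I will focus on the forward direction.

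For the forward direction, write $P_i = w_i P_\lambda$ for $i \in \{1,2\}$ with $w_i$ coprime to $\ell$ (this is exactly the assumption $P_i \in \langle P_\lambda \rangle \setminus \langle \ell P_\lambda\rangle$). From $\sigma_E(P_\lambda) = \lambda P_\lambda$ we compute
\begin{equation*}
Tr_E(w_i P_\lambda) = \sum_{j=0}^{\ell^a - 1} \sigma_E^j(w_i P_\lambda) = w_i S P_\lambda.
\end{equation*}
Since $P_\lambda$ has order $\ell^{a+1}$, the hypothesis $Tr_E(P_1) = Tr_E(P_2)$ is equivalent to the congruence $(w_1 - w_2) S \equiv 0 \pmod{\ell^{a+1}}$. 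Thus the whole lemma comes down to determining $v_\ell(S)$: if $v_\ell(S) = a$ exactly, then this congruence forces $w_1 \equiv w_2 \pmod{\ell}$, which by Lemma \ref{orbit_lemma} means $\Sigma.P_1 = \Sigma.P_2$.

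The core computation is therefore to show $v_\ell(S) = a$. Starting from $\lambda - 1 = \gamma \ell$ with $\gcd(\gamma,\ell) = 1$ and the identity $(\lambda - 1) S = \lambda^{\ell^a} - 1$, the problem reduces to proving $v_\ell(\lambda^{\ell^a} - 1) = a + 1$. The upper bound $v_\ell(\lambda^{\ell^a} - 1) \geq a+1$ is immediate from $P_\lambda$ having period $\ell^a$ under $\sigma_E$ (Lemma \ref{ldegree_lemma}); equality follows from the lifting-the-exponent lemma
\begin{equation*}
v_\ell\bigl((1+\gamma\ell)^{\ell^a} - 1\bigr) = v_\ell(\gamma\ell) + v_\ell(\ell^a) = 1 + a,
\end{equation*}
which applies because $\ell$ is odd in our setting: the hypothesis $\ell \nmid q(q-1)$ rules out $\ell = 2$ regardless of whether $p$ is $2$ or odd (if $p=2$ then $\ell \neq p = 2$; if $p$ is odd then $2 \mid q-1$ forbids $\ell = 2$). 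Equivalently, a direct binomial expansion of $(1+\gamma \ell)^{\ell^a}$ modulo $\ell^{a+2}$ would give the same conclusion.

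The one delicate point, which I view as the main obstacle to watch, is exactly this LTE step: one must verify that $\ell$ is odd before applying the clean form of the lemma, and must correctly account for $P_\lambda$ having order $\ell^{a+1}$ (not $\ell^a$) so that the modulus in the final congruence is $\ell^{a+1}$ and the resulting condition $v_\ell(w_1 - w_2) \geq 1$ lines up precisely with the orbit structure supplied by Lemma \ref{orbit_lemma}.
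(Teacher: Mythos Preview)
Your proof is correct and takes a genuinely different route from the paper's. Both arguments hinge on the same key fact, namely that $Tr_E(P_\lambda)$ has order exactly $\ell$ (equivalently, in your notation, that $v_\ell(S)=a$), but they establish it by entirely different means. The paper argues by contradiction using the Hasse--Weil bound: if $Tr_E(P_\lambda)=\mathcal{O}$, then the image of $Tr_E$ would have index at least $\ell$ in $E(\F_q)$, forcing $|E(\F_{q^{\ell^a}})|$ below the Hasse--Weil lower bound. You instead compute $Tr_E(P_\lambda)=SP_\lambda$ explicitly and pin down $v_\ell(S)$ via lifting-the-exponent, using only the arithmetic of $\lambda=1+\gamma\ell$ and the oddness of $\ell$. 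Your approach is more elementary and entirely local---it never appeals to point counts on $E$ over the big field---and it delivers the exact order of $Tr_E(P_\lambda)$ in one stroke rather than first proving nontriviality and then invoking $|E(\F_q)[\ell]|=\ell$. The paper's route, on the other hand, is more robust in spirit: it would adapt to settings where one lacks such clean control over the Frobenius eigenvalue but still has Weil-type bounds. Both proofs finish the same way, reducing via Lemma~\ref{orbit_lemma} to the congruence $w_1\equiv w_2\pmod{\ell}$.
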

\begin{proof}
If $P_1,P_2 \in \langle P_\lambda\rangle \setminus \langle \ell P_\lambda\rangle$ and $\Sigma.P_1=\Sigma.P_2$, then $\exists j \in \{0,1,\ldots,\ell^{a}-1\}$ such that $P_2=\sigma_E^jP_1$. Hence, $Tr(P_2)=Tr_E(\sigma_E^jP_1)= \sigma_E^j Tr_E(P_1) = Tr_E(P_1)$. 
We next prove the converse, that is, the ``only if" part of the lemma. 
 For every $\alpha \in \F_q$ at most $q^{\ell^a}-1$ elements in $\F_q^{\ell^a}$ have trace (down to $\F_q$) $\alpha$. If $Tr_E(P_\lambda)=\mathcal{O}$, then $$[E(\F_q):Tr_E(E(\F_{q^{\ell^a}}))] \geq \ell \Rightarrow |E(\F_{q^{\ell^a}})| \leq \left(1+\frac{2}{\sqrt{q}}\right)\frac{q^{\ell^a}}{\ell}.$$
This contradicts the Hasse-Weil bound $|E(\F_{q^{\ell^a}})| \geq q^{\ell^a}-2\sqrt{q^{\ell^a}}$. Thus $Tr_E(P_\lambda) \neq \mathcal{O}$.\\ \\
%Since  $$Tr_E(P_\lambda) = \sum_{i=0}^{\ell^a-1}\sigma_E^iP_\lambda =\sum_{i=0}^{\ell^a-1}\lambda^iP_\lambda = \sum_{i=0}^{\ell^a-1} (1+ i \gamma \ell) P_\lambda =  \ell^a P_\lambda + \frac{\gamma \ell^{a+1} (\ell^{a}-1)}{2}P_\lambda = \ell^{a} P_\lambda,$$
%$Tr_E(P_\lambda) \neq \mathcal{O}$. For otherwise, $P_\lambda$ has order $\ell^{a}$ contradicting the fact that $P_\lambda$ has order $\ell^{a+1}$.\\ \\
Let $P_1,P_2 \in \langle P_\lambda\rangle \setminus \langle \ell P_\lambda\rangle$ and $Tr_E(P_1)=Tr_E(P_2)$. By Lemma \ref{orbit_lemma}, there exists $z_1,z_2 \in \{1,2,\ldots,\ell-1\}$ such that $P_1 \in \Sigma.z_1P_\lambda$ and $P_2 \in \Sigma.z_2P_\lambda$. Hence $Tr_E(P_1) = Tr_E(z_1P_\lambda) = z_1 Tr_E(P_\lambda)$. Likewise, $Tr_E(P_2) = z_2 Tr_E(P_\lambda)$. Since $Tr_E(P_1)=Tr_E(P_2)$, $(z_1-z_2)Tr_E(P_\lambda)=\mathcal{O}$. Since $Tr_E(P_\lambda) \in E(\F_q)[\ell]$, $|E(\F_q)[\ell]|=\ell$ and $Tr(P_\lambda) \neq \mathcal{O}$, the order of $Tr_E(P_\lambda)$ is $\ell$. Hence $z_1-z_2=0 \mod \ell$ thereby implying $\Sigma.P_1 = \Sigma.P_2$.
\end{proof}

\begin{algorithm}[H]
\caption{Root Finding Through Elliptic Curve Discrete Logarithms}\label{elliptic_algorithm}
\begin{algorithmic}[1]
\INPUT  Monic irreducibles $g_1(x),g_2(x) \in \F_q[X]$ of degree $\ell^a$ where $\ell \leq \sqrt{q}$ is a prime not dividing $q(q-1)$.
\OUTPUT A root of $g_1(x)$ in $\F_q(\alpha_2)$ where $\alpha_2 \in \overline{\F}_q$ is a root of $g_2(x)$.
\State Find an elliptic curve $E/\F_q$ with $\ell ||E(\F_q)|$ and $\ell^2 \nmid|E(\F_q)|$.
\State %Compute an element in $E(\F_{q^{\ell^a}})[\ell^{a+1}]$ as follows.
\LineComment Construct $\F_{q^{\ell^a}}$ as $\F_q(\alpha_1)$ where $\alpha_1$ is a root of $g_1(x)$.
\LineComment Find a point $P_1 \in E(\F_{q^{\ell^a}})$ of order $\ell^{a+1}$.
\LineComment $x(P_1)$ is obtained as $f_1(\alpha_1)$ for some $f_1(x)\in \F_q[x]$ of degree less than $\ell^a$.
%\LineComment Since $\F_{q^{\ell^a}}$ is constructed as $\F_q(\alpha_1)$, we obtain $x(P_1) = f_1(\alpha_1)$ for some $f_1(x)\in \F_q[x]$ of degree less than $\ell^a$. 
\LineComment Compute $Tr_E(P_1)$.
\State %Compute an element in $E(\F_{q^{\ell^a}})[\ell^{a+1}]$ as follows.
\LineComment Construct $\F_{q^{\ell^a}}$ as $\F_q(\alpha_2)$ where $\alpha_2$ is a root of $g_2(x)$.
\LineComment Find a point $P_2 \in E(\F_{q^{\ell^a}})$ of order $\ell^{a+1}$.
\LineComment Compute $Tr_E(P_2)$.
\State Find the $z \in \{1,\ldots,\ell-1\}$ such that $Tr_E(P_1)=zTr_E(P_2)$ by solving a discrete logarithm problem in the order $\ell$ cyclic group $E(\F_q)[\ell]$.
\State Compute $zP_2$ and obtain $x(zP_2) = f_2(\alpha_2)$ for some $f_2(x)\in \F_q[x]$ of degree less than $\ell^a$. 
\State Apply Lemma \ref{toeplitz_lemma} to the relation $f_1(\alpha_1) \sim f_2(\alpha_2)$ and output a root of $g_1(x)$ in $\F_q(\alpha_2)$.
\end{algorithmic}
\end{algorithm}
We first argue that algorithm \ref{elliptic_algorithm} is correct. An elliptic curve $E/\F_q$ as required in Step $1$ exists as $\ell^2 \leq \sqrt{q}$ implies $\ell$ has a multiple not divisible by $\ell^2$ in the Hasse interval. As $P_1$ and $P_2$ are both in $E(\F_{q^{\ell^a}})$ and of order $\ell^{a+1}$, by Lemma \ref{eigenspace_lemma}, $P_1,P_2 \in \langle P_\lambda\rangle \setminus \ell\langle P_\lambda\rangle$. Hence by Lemma \ref{orbit_lemma}, there exists $z \in \{1,2,\ldots,\ell-1\}$ such that 
\begin{equation}\label{dlog_orbit_equation}
P_1 = \Sigma.z P_2.
\end{equation}
By Lemma \ref{distinct_trace_lemma}, equation \ref{dlog_orbit_equation} holds if and only if 
\begin{equation}\label{dlog_equation}
Tr_E(P_1) = z Tr_E(P_2). 
\end{equation}
Hence $z$ as desired in Step $4$ exists and further for such a $z$, there exists an integer $j$ such that $P_1 = \sigma_E^j(zP_2)$ implying $f_1(\alpha_1) \sim f_2(\alpha_2)$.\\ \\
The bottleneck in the algorithm happens to be computing a point of order $\ell^{a+1}$ in Steps $2$ and $3$. An algorithm for this task is presented in the subsequent subsection. For now, we discuss the implementation of the other steps. In Step $1$, we generate elliptic curves $E/\F_q$ by choosing a Weierstrass model over $\F_q$ uniformly at random. Then we compute $|E(\F_q)|$ using Schoof's point counting algorithm in $\widetilde{O}(\log^5 q)$ time and check if $\ell||E(\F_q)|$ and $\ell^2 \nmid |E(\F_q)|$. Since $5\ell^3 \leq \sqrt{q}$, the probability that $\ell||E(\F_q)|$ and $\ell^2 \nmid |E(\F_q)|$ is close to $1/(\ell-1)$ \cite[Thm 1.1]{how}. Hence Step $1$ can be completed in time $O(\ell \log^5 q)$. The iterated Frobenius algorithm of von zur Gathen and Shoup \cite{gs} implemented using fast modular composition \cite{ku} computes traces in finite field extensions in nearly linear time. With minor modifications (performing elliptic curve addition in place of finite field addition), it computes $Tr_E(P_1)$ and $Tr_E(P_2)$ in Steps $2$ and $3$ in $\ell^{a+o(1)}\log^{1+o(1)} q$ time. 
%Computing $Tr_E(P_1)$ and $Tr_E(P_2)$ in Steps $2$ and $3$ amounts to computing the trace down to $\F_q$ coordinate wise. An algorithm of Shoup and von zur Gathen \cite{gs} implemented using fast modular composition \cite{ku} computes these traces in $\widetilde{O}(\ell^{a}\log^2 q)$ time.(not quite, have to include elliptic curve addition)
The discrete logarithm computation in Step $4$ can be performed with $\mathcal{O}(\sqrt{\ell})$ $E(\F_q)$-additions by the baby step giant step algorithm. Since $z<\ell$, Step $5$ only takes $O(\log(\ell))$ $E(\F_{q^{\ell^a}})$ additions. From Lemma \ref{toeplitz_lemma}, Step $6$ runs in $\ell^{a+o(1)}\log^{1+o(1)} q$ time. 
\subsection{Lang's theorem and Finding $\ell$ Power Torsion with $\ell$ Isogenies}\label{lang_subsection}
 In \S~\ref{kummer_section} and \S~\ref{artin_schreier_section}, we exploited certain idempotents in proofs of Hilbert's theorem 90 to solve the \textsc{Isomorphism Problem} restricted to the case where $n$ is a power of a prime $\ell$ dividing $p(q-1)$ in linear time. The bottleneck in Algorithm \ref{elliptic_algorithm} for the case $\ell \nmid p(q-1)$ is 
\begin{problem}\label{problem_find_generator} Given a monic irreducible $g(x) \in \F_q[x]$ of prime power $\ell^a$ degree (where $\ell \nmid p(q-1)$), an elliptic curve $E/\F_q$ (where $\ell$ divides $|E(\F_q)|$ but $\ell^2$ does not) and $|E(\F_q)|$, find a generator of $E(\F_{q^{\ell^a}})[\ell^{a+1}]$ where $\F_{q^{\ell^a}}$ is constructed as $\F_q(\alpha)$ for some root $\alpha$ of $g(x)$.
\end{problem}
We next solve Problem \ref{problem_find_generator} using elliptic curve isogenies.
\begin{algorithm}[H]
\caption{Finding $\ell$ Power Torsion:}\label{ltorsion_algorithm}
\begin{algorithmic}[1]
\INPUT 
\LineComment Monic irreducible $g(x) \in \F_q[X]$ of degree $\ell^a$ where $\ell \leq \sqrt{q}$ is a prime not dividing $q(q-1)$. \LineComment An elliptic curve $E/\F_q$ such that $\ell ||E(\F_q)|$ and $\ell^2 \nmid|E(\F_q)|$.
\LineComment $|E(\F_q)|$.
\OUTPUT A point $P \in E$ of order $\ell^{a+1}$ with coordinates in $\F_q[x]/(g(x))$.
\State Construct $\F_{q^{\ell^a}}$ as $\F_q(\alpha)$ for a root $\alpha$ of $g(x)$.
\State Let $\iota:E \longrightarrow \widetilde{E}$ be the isogeny with kernel $\ker(\iota) = E(\F_q)[\ell]$.
\LineComment Compute a Weierstrass equation for $\widetilde{E}/\F_q$.
\LineComment Compute $\phi_\iota(x),\psi_\iota(x) \in \F_q[x]$  such that $\mathrm{x}(\iota(R))=\psi_\iota(\mathrm{x}(R))/\phi_\iota(\mathrm{x}(R)), \forall R \in E$.
\State If $a = 1$
\LineComment Find a point $\widetilde{T} \in \widetilde{E}(\F_q)$ of order $\ell$. 
\LineComment Find a root $\gamma \in \F_q(\alpha)$ of $\phi_\iota(x)- \mathrm{x}(\widetilde{T})\psi_\iota(x) \in \F_q[x]$.
\LineComment Output a point in $E$ with $\mathrm{x}$-coordinate $\gamma$.
\State If $a \neq 1$
\LineComment Find $\widetilde{\alpha}\in \F_q(\alpha)$ of degree $\ell^{a-1}$ and its minimal polynomial $M(x) \in \F_q[x]$ by Lemma \ref{subfield_projection_lemma}.
\LineComment Recursively find a point $\widetilde{P} \in \widetilde{E}(\F_{q^{\ell^{a-1}}})$ of order $\ell^a$ by calling this very algorithm with input $(M(x),\widetilde{E}/\F_q,|E(\F_q)|)$.
\LineComment Find a root $\eta \in \F_q(\alpha)$ of $\phi_\iota(x)- \mathrm{x}(\widetilde{P})\psi_\iota(x) \in \F_q(\widetilde{\alpha})[x]$.%, where $\widetilde{\mathrm{x}}$ is the $\mathrm{x}$-coordinate function on $\widetilde{E}$.
\LineComment Output a point in $E$ with $\mathrm{x}$-coordinate $\gamma$.
\end{algorithmic}
\end{algorithm}
In Step 2, the Weierstrass equation for $\widetilde{E}$ and the polynomials $\psi_\iota(x)$ and $\phi_\iota(x)$ can all be computed in $\ell^{a+o(1)} \log^{2+o(1)} q$ time \cite{cl}. In Step 3, a point $\widetilde{T} \in \widetilde{E}(\F_q)$ of order $\ell$ can be found in $\widetilde{O}(\ell \log q)$ time as follows: generate $\widetilde{R} \in \widetilde{E}(F_q)$ at random and output $\widetilde{T} = |\widetilde{E}(\F_q)|/\ell$ if its not the identity. Note we know don't have to compute $|\widetilde{E}(\F_q)|$ since $|\widetilde{E}(\F_q)|=|E(\F_q)|$. The root finding in Step $3$ takes $\ell^{2+o(1)} \log^{2+o(1)} q$ time using \cite{ks,ku}. By \cite{cl}, a root $\gamma$ of $\phi_\iota(x)- \mathrm{x}(\widetilde{T})\psi_\iota(x) \in \F_q[x]$ has degree $\ell$ and the two points in $E$ with $\mathrm{x}$-coordinate $\gamma$ both have order $\ell^2$ and are in $E(\F_{q^\ell})$. Thus the output at the end of Step $3$ is correct. Likewise, in Step $4$, by \cite{cl}, a root $\eta$ of $\phi_\iota(x)- \mathrm{x}(\widetilde{P})\psi_\iota(x)$ has degree $\ell^a$ and the two points in $E$ with $\mathrm{x}$-coordinate $\eta$ both have order $\ell^{a+1}$ and are in $E(\F_{q^{\ell^a}})$. Hence the output at the end of Step $4$ is correct. The root finding in Step $4$ takes $\ell^{a+1+o(1)} \log^{1+o(1)} q$ time using \cite{ks,ku} and is the bottleneck. The number of recursive calls is at most $a$ which being logarithmic in $\ell^a$ can be ignored in the run time analysis.\\ \\
Using Algorithm \ref{ltorsion_algorithm} as a subroutine, Algorithm \ref{elliptic_algorithm} solves the \textsc{Isomorphism Problem} restricted to the special case when $n=\ell^a$ for some prime $\ell$ such that $\ell \nmid q(q-1)$ and $5\ell^3 \leq \sqrt{q}$. The restriction $5\ell^3 \leq \sqrt{q}$ may be removed without loss of generality. For if $5\ell^3 > \sqrt{q}$ in the \textsc{Isomorphism Problem}, we may pose the problem over a small degree extension $\F_{q^d}$ instead of $\F_q$ where $d$ is the smallest positive integer such that $\ell \leq \sqrt{q^d}$ and $\ell \nmid d$ (c.f.\cite{rai}). In summary, we have proven
\begin{theorem}\label{elliptic_theorem}
Algorithm \ref{elliptic_algorithm} solves the \textsc{Isomorphism Problem} restricted to the special case when $n=\ell^a$ for some prime $\ell \nmid q(q-1)$ in $\ell^{a+1+o(1)} \log^{1+o(1)} q + O(\ell \log^5 q)$ time. 
\end{theorem}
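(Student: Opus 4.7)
The plan is to verify correctness of Algorithm \ref{elliptic_algorithm}, tabulate the cost of each step, and then remove the auxiliary hypothesis $5\ell^3\le\sqrt q$ by base change.

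For correctness I would assemble the lemmas that have already been put in place. Howe's density estimate \cite[Thm 1.1]{how} guarantees that a random Weierstrass model over $\F_q$ satisfies $\ell\mid|E(\F_q)|$ and $\ell^2\nmid|E(\F_q)|$ with probability $\approx 1/(\ell-1)$, so that sampling together with Schoof's point counting realises Step $1$. The outputs $P_1, P_2$ of Steps $2$ and $3$ are $\F_{q^{\ell^a}}$-rational points of order $\ell^{a+1}$; Lemma \ref{eigenspace_lemma} places them inside $\langle P_\lambda\rangle$, and having full order puts them in $\langle P_\lambda\rangle\setminus\ell\langle P_\lambda\rangle$. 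By Lemma \ref{orbit_lemma} there exist unique $z_1,z_2\in\{1,\dots,\ell-1\}$ with $P_i\in\Sigma.z_iP_\lambda$, and Lemma \ref{distinct_trace_lemma} shows the equation $\mathrm{Tr}_E(P_1)=z\,\mathrm{Tr}_E(P_2)$ has a unique solution $z\in\{1,\dots,\ell-1\}$, namely $z=z_1z_2^{-1}\bmod\ell$, and is equivalent to $\Sigma.P_1=\Sigma.(zP_2)$. Consequently there is an integer $j$ with $P_1=\sigma_E^j(zP_2)$, whence $f_1(\alpha_1)=\mathrm{x}(P_1)\sim\mathrm{x}(zP_2)=f_2(\alpha_2)$, and since both have degree $\ell^a$ by Lemma \ref{ldegree_lemma}, Lemma \ref{toeplitz_lemma} extracts the sought root in Step $6$.

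For the running time I would account for each step individually. Step $1$ applies Schoof at cost $\widetilde O(\log^5 q)$ per trial and terminates in $O(\ell)$ expected trials, giving $\widetilde O(\ell\log^5 q)$. Steps $2$ and $3$ construct the field, invoke Algorithm \ref{ltorsion_algorithm} (whose bottleneck is the Kaltofen--Shoup root-finding of $\phi_\iota(x)-\mathrm{x}(\widetilde P)\psi_\iota(x)$ at the deepest recursion level, costing $\widetilde O(\ell^{a+1}\log^2 q)$), and compute $\mathrm{Tr}_E(P_i)$ via the von zur Gathen--Shoup iterated Frobenius \cite[Alg 5.2]{gs} with elliptic curve addition replacing field addition, at $\widetilde O(\ell^a\log^2 q)$. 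Step $4$ is a baby-step giant-step discrete logarithm in $E(\F_q)[\ell]$ costing $O(\sqrt\ell)$ group operations in $E(\F_q)$; Step $5$ is $O(\log\ell)$ additions on $E(\F_{q^{\ell^a}})$; Step $6$ is $\widetilde O(\ell^a\log^2 q)$ by Lemma \ref{toeplitz_lemma}. Summing, the dominant contributions are $\widetilde O(\ell^{a+1}\log q)$ (from Algorithm \ref{ltorsion_algorithm}) and $\widetilde O(\ell\log^5 q)$ (from Schoof).

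Finally, to drop the hypothesis $5\ell^3\le\sqrt q$ I would imitate Rains \cite{rai}: replace $\F_q$ by the extension $\F_{q^d}$ where $d$ is the least positive integer with $\gcd(d,\ell)=1$ and $5\ell^3\le\sqrt{q^d}$. Because $\gcd(d,\ell^a)=1$, the extensions $\F_q(\alpha_i)$ lift to degree $\ell^a$ extensions of $\F_{q^d}$ and an isomorphism there restricts canonically to one between the original fields; the overhead $d=O(\log\ell/\log q)$ is absorbed by $\widetilde O$. The principal technical hurdle in the above plan is the recursive analysis of Algorithm \ref{ltorsion_algorithm}: one must confirm that at each depth the root-degree predictions of \cite[\S 4.2]{cl} hold, so that the $\ell$-isogeny $\iota$ effectively lifts $x$-coordinates, and that the recursion of depth $a$ collapses to the single-level bottleneck $\widetilde O(\ell^{a+1}\log^2 q)$; this is exactly where the $\ell^{a+1}$ factor is incurred and where the subquadratic-in-$n$ behaviour for small $\ell$ comes from.
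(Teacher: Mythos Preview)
Your proposal is correct and follows essentially the same approach as the paper: correctness via Lemmas \ref{eigenspace_lemma}--\ref{distinct_trace_lemma}, a step-by-step cost accounting with the Kaltofen--Shoup root finding inside Algorithm \ref{ltorsion_algorithm} identified as the $\widetilde O(\ell^{a+1})$ bottleneck and Schoof point counting as the $\widetilde O(\ell\log^5 q)$ term, and a Rains-style base change to $\F_{q^d}$ with $\gcd(d,\ell)=1$ to remove the auxiliary hypothesis $5\ell^3\le\sqrt q$. Your treatment is in fact slightly more explicit than the paper's in a few places (e.g.\ naming $z=z_1z_2^{-1}\bmod\ell$ and bounding $d$), but the structure and all key ingredients coincide.
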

The running time is subquadratic in the input degree $\ell^a$ if $a>1$. If $a=1$, that is, if the input degree is a prime $\ell$, the running time is quadratic. The question if a sub quadratic algorithm for the later case exists remains open. We look to Lang's theorem, an elliptic curve analogue of Hilbert's theorem 90 in hopes of solving the bottleneck Problem \ref{problem_find_generator} in subquadratic time. Lang's theorem states that the first cohomology group $H^1(\F_q,E)$ of an elliptic curve $E$ over $\F_q$ is trivial. That is, the Lang map $\psi: E \longrightarrow E$ taking $P$ to $\sigma_E(P)-P$ 
%$$\psi: E \longrightarrow E$$
%$$\ \ \ \ \ \ \ \ \ \ \ \ \ \ \ \ P \longmapsto\sigma_E(P)-P$$
is surjective. Problem \ref{problem_find_generator} is rephrased in terms of computing preimages under the Lang map as the following Problem \ref{problem_lang}. Problems \ref{problem_find_generator} and \ref{problem_lang} are equivalent since the preimage of $E(\F_q)[\ell] \setminus \{O\}$ under $\psi$ is $E(\F_{q^{\ell^a}})[\ell^{a+1}]$.
\begin{problem}\label{problem_lang}
Given a monic irreducible $g(x) \in \F_q[x]$ of prime power $\ell^a$ degree (where $\ell \nmid p(q-1)$), an elliptic curve $E/\F_q$ (where $\ell$ divides $|E(\F_q)|$ but $\ell^2$ does not) and $|E(\F_q)|$, find a preimage under the Lang map $\psi$ of $E(\F_q)[\ell] \setminus \{O\}$ in $E(\F_{q^{\ell^a}})$ where $\F_{q^{\ell^a}}$ is constructed as $\F_q(\alpha)$ for some root $\alpha$ of $g(x)$.
\end{problem}
\textsc{Open Problem:} \textit{Solve Problem \ref{problem_find_generator} or Problem \ref{problem_lang} in time sub quadratic in $\ell^a$.}
 %Using algorithm \ref{ltorsion_algorithm} (to solve problem \ref{problem_find_generator}) as a subroutine in algorithm \ref{elliptic_algorithm} yields the following.

%\section*{Acknowledgement}
%I thank Matthias Flach, Ming-Deh Huang, Eric Rains and Chris Umans for valuable discussions.

\bibliographystyle{plain}

\end{document}